\pgfplotsset{compat=1.17}
\tikzset{
	dot/.style = {circle, fill, minimum size=#1,
		inner sep=0pt, outer sep=0pt},
	dot/.default = 5pt
}
\xpatchcmd\thmt@restatable{%
\csname #2\@xa\endcsname\ifx\@nx#1\@nx\else[{#1}]\fi
}{%
\ifthmt@thisistheone
\csname #2\@xa\endcsname\ifx\@nx#1\@nx\else[{#1}]\fi
\else
\csname #2\@xa\endcsname[{restated}]
\fi}{}{}
\newcommand{\Col}{\textup{\textsc{Colouring}}}
\newcommand{\kCol}[1]{\textup{\textsc{#1-Colouring}}}
\newcommand{\ProbeCol}{\textup{\textsc{Colouring}}}
\newcommand{\ProbekCol}[1]{\textup{\textsc{#1-Colouring}}}
\newcommand{\PreExt}{\textup{\textsc{1-Precolouring Extension}}}
\newcommand{\ExactThreeCover}{\textsc{Exact 3-Cover}}
\newtheorem{open}[theorem]{Problem}
\title{Colouring Probe $H$-Free Graphs} 
\titlerunning{Colouring Probe $H$-Free Graphs}
\author{Dani\"el Paulusma}{Department of Computer Science, Durham University, Durham, United Kingdom}{daniel.paulusma@durham.ac.uk}{0000-0001-5945-9287}{}
\author{Johannes Rauch}{Institute of Optimization and Operations Research, Ulm University, Ulm, Germany}{johannes.rauch@uni-ulm.de}{0000-0002-6925-8830}{Supported by the German Academic Scholarship Foundation (Studienstiftung des Deutschen Volkes).}
\author{Erik Jan van Leeuwen}{Department of Information and Computing Sciences, Utrecht University, Utrecht, The Netherlands}{e.j.vanleeuwen@uu.nl}{0000-0001-5240-7257}{}
\authorrunning{D. Paulusma, J. Rauch, and E.J. van Leeuwen.}
\keywords{colouring, probe graph, forbidden induced subgraph, complexity dichotomy}
\begin{document}
\maketitle

\begin{abstract}
The \NP-complete problems \Col{} and \kCol{$k$} $(k\geq 3$) are well studied on $H$-free graphs, i.e., graphs that do not contain some fixed graph $H$ as an induced subgraph. We research to what extent the known polynomial-time algorithms for $H$-free graphs can be generalized if we only know some of the edges of the input graph. We do this by considering the classical probe graph model introduced in the early nineties. For a graph~$H$, a partitioned probe $H$-free graph $(G,P,N)$ consists of a graph $G=(V,E)$, together with a set $P\subseteq V$ of probes and an independent set $N=V\setminus P$ of non-probes, such that $G+F$ is $H$-free for some edge set $F\subseteq \binom{N}{2}$. We show the following:
\begin{itemize}
\item  We fully classify  \Col{} on partitioned probe $H$-free graphs and show that the obtained complexity dichotomy differs from the known dichotomy of \Col{} for $H$-free graphs.\\[-5pt]
\item We fully classify \kCol{$3$} on partitioned probe $P_t$-free graphs: we prove polynomial-time solvability for $t\leq 5$ and \NP-completeness for $t\geq 6$. In contrast, \kCol{$3$} on $P_t$-free graphs is known to be polynomial-time solvable for $t\leq 7$ and quasi-polynomial-time solvable for $t\geq 8$. 
\end{itemize}
Our main result is our polynomial-time algorithm for {\sc $3$-Colouring} on partitioned $P_5$-free graphs. For this result, and also for all our other polynomial-time results, we do not need to know the edge set $F$; we only need to know its existence. 
Moreover, the class of probe $P_5$-free graphs includes not only paths of arbitrary length but even all bipartite graphs
and is much richer than the class of $P_5$-free graphs. The latter is also evidenced by the fact that there exist graph problems, such as {\sc Matching Cut}, that are known to be polynomial-time solvable for $P_5$-free graphs but \NP-complete for partitioned probe $P_5$-free graphs.  In particular, unlike the class of $3$-colourable $P_5$-free graphs, the class of $3$-colourable probe $P_5$-free graphs has unbounded mim-width.  Hence, our polynomial-time result for $3$-{\sc Colouring} for probe $P_5$-free graphs suggests that there may be another, deeper overarching reason why {\sc $3$-Colouring} is polynomial-time solvable for $P_5$-free graphs. 
\end{abstract}

\section{Introduction}\label{sec:intro}
\Col{} is a classical graph problem. Given a graph~$G$ and a positive integer $k$, it asks whether it is possible to colour the vertices of $G$ with $k$ colours such that any two adjacent vertices receive different colours. The variant where $k$ is fixed beforehand, and not part of the input anymore, is known as \kCol{$k$}.
It is well known that \kCol{3}, and thus \Col{}, are \NP-complete problems~\cite{garey1979computers}.
This led to a rich body of literature
that tries to understand what graph structure causes the computational hardness in \Col{}. 
In our paper we extend this body of work by researching the computational complexity of \Col{} and \kCol{$k$} on classes of graphs that generalize the well-known $H$-free graphs (a graph $G$ is {\it $H$-free} if $G$ does not contain $H$ as an induced subgraph) but for which we do not know all the edges.
Before discussing our model of incomplete information, we first briefly survey some relevant 
known results for \Col{} and \kCol{$k$}.

\vspace*{-0.3cm}
\subparagraph*{${\mathbf{H}}$-Free Graphs} 
Kr{\'{a}}l et al.~\cite{kral2001complexity} showed that if $H$ is a (not necessarily proper) induced subgraph of $P_4$ or $P_3 + P_1$, 
where $P_t$ denotes the path on $t$ vertices, 
then \Col{} on $H$-free graphs is solvable in polynomial time; otherwise, it is \NP-complete.
For \kCol{$k$}, the complexity status on $H$-free graphs has not been resolved yet. 
For every $k\geq 3$, {\sc $k$-Colouring} for $H$-free graphs is \NP-complete if $H$ has a cycle~\cite{EHK98} or an induced claw~\cite{holyer1981thenpcompleteness,LG83}. However, the remaining case where $H$ is a linear forest (disjoint union of paths) has not been settled yet. For $P_t$-free graphs, the cases $k\leq 2$, $t\geq 1$ (trivial), $k\geq 3$, $t\leq 5$~\cite{HKLSS10}, $k=3$, $6\leq t\leq 7$~\cite{BCMSSZ18} and $k=4$, $t=6$~\cite{chudnovsky2024fourcolouring1,chudnovsky2024fourcolouring2} are polynomial-time solvable and the cases  $k=4$, $t\geq 7$~\cite{Hu16} and  $k\geq 5$, $t\geq 6$~\cite{Hu16} are \NP-complete. The cases $k=3$ and $t\geq 8$ are still open, despite some evidence that these cases are polynomial-time solvable due to a quasi-polynomial-time algorithm~\cite{PPR21}.
We refer to the survey~\cite{GJPS17} and some later articles~\cite{CHS24,CHSZ21,HLS22,KMMNPS20} for 
results on \kCol{$k$} for $H$-free graphs if $H$ is a disconnected linear forest, 
and to~\cite{JKMM}  for the most recent results on \Col\ for $(H_1,H_2)$-free graphs, for which also still many cases remain open.

\vspace*{-0.3cm}
\subparagraph*{Probe ${\mathbf{H}}$-Free Graphs}
In this article, we aim to further our understanding of the complexity of \Col{} and \kCol{$k$} by studying \emph{probe graphs}. 
Probe graphs are used to model graphs for which the global structure is known (e.g.\ $H$-freeness). However, we only know the complete set of neighbours for {\it some} vertices of a probe graph~$G$. These vertices are the {\it probes} of $G$. The other vertices are the {\it non-probes} of $G$ and form an independent set in $G$, as we do not know which of them are adjacent to each other. We only know that there exists a ``certifying'' set $F$ of edges on the non-probes such that $G+F$ exhibits the global structure (e.g. being $H$-free). In particular, the subgraph of $G$ induced by the set of probes already has this global structure (e.g. is $H$-free).
The notion of probe graphs was introduced by Zhang et al.~\cite{zhang1994analgorithm} in genome research 
 to make a genome mapping process more efficient.

Formally, for a graph class ${\mathcal G}$, the class ${\cal G}_p$ consists of all graphs $G$ that can be modified into a graph from ${\cal G}$ by adding some edges between an independent set $N$ of $G$.
If for a graph in~${\mathcal G}_p$, the sets $P$ and $N=V\setminus P$ are given, then we speak of a {\it partitioned} probe graph.
Hence, a {\it partitioned probe $H$-free} graph $(G,P,N)$ consists of a graph $G=(V,E)$, together with a set $P\subseteq V$ of probes and an independent set $N=V\setminus P$ of non-probes, such that $G+F$ is $H$-free for some edge set $F\subseteq \binom{N}{2}$. We note that an $H$-free graph itself is also a (partitioned) probe $H$-free graph, namely with $P=V$ and $N=\emptyset$. Hence, for every graph $H$, the class of (partitioned) probe $H$-free graphs contains the class of $H$-free graphs. Consequently, any \NP-completeness results for $H$-free graphs immediately carry over to partitioned probe $H$-free graphs. However, it also leads to the following research question:

\medskip
\noindent
\textit{If an \NP-complete problem $\Pi$ is polynomial-time solvable on $H$-free graphs for some graph~$H$, is $\Pi$ also polynomial-time solvable on (partitioned) probe $H$-free graphs?}

\vspace*{-0.5cm}
\subparagraph{Our Focus}
To investigate our research question, we consider \Col{} and \kCol{$k$} for (partitioned) probe $H$-free graphs. 
For some graphs $H$, such as $H=P_4$~\cite{chang2005ontherecognition}, probe $H$-free graphs can be recognized in polynomial time.
However, for most graphs~$H$, including $H=P_5$, the recognition of probe $H$-free graphs and the distinction between probes and non-probes are open problems.
Hence, we usually require the sets of probes $P$ and non-probes $N$ to be part of the input, i.e., we must consider partitioned probe $H$-free graphs. 
Note that we can colour a probe $H$-free graph $G$ with one extra colour (assigned to each vertex in $N$) than the number of colours used for $G[P]$. The challenge is to determine if we need that extra colour. 

\subsection*{Related Work}
So far, most previous work on probe graphs focused on characterising and recognising classes of probe graphs~\cite{golumbic2004chordalprobe,chang2005ontherecognition,chandler2009onprobepermutationgraphs,bayer2009probethresholdprobetriviallyperfect,golumbic2011chainprobegraphs,berry2007recognizingchordalprobegraphs,le2015recognizingprobeblockgraphs}. However, recently, the first systematic studies of optimisation problems on partitioned probe $H$-free graphs were undertaken. Brettell et al.~\cite{BrettellOPPRL25} considered \textsc{Vertex Cover} 
 on partitioned probe $H$-free graphs 
 and Dabrowski et al.~\cite{DEJPP25} did the same for {\sc Matching Cut} and some of~its variants. A takeaway from~\cite{BrettellOPPRL25,DEJPP25} is that determining the complexity of {\sc Vertex Cover} for (partitioned) probe $P_5$-free~graphs seems challenging, whereas {\sc Matching Cut} is \NP-complete on partitioned probe $P_5$-free graphs, even though they  are both 
 polynomial-time solvable on $P_5$-free graphs~\cite{LVV14,Fe23}.

Helpful for algorithmic studies as~\cite{BrettellOPPRL25,DEJPP25} is that probe graphs inherit some properties from the graph class they are based on. 
This is also true when studying \Col{}. For example, 
Golumbic and Lipshteyn~\cite{golumbic2004chordalprobe} proved that 
probe chordal graphs are perfect. Hence, we obtain that \Col{} is polynomial-time solvable for probe chordal graphs, as {\sc Colouring} is so for perfect graphs~\cite{groetschel1981ellipsoidmethodcombinatorialoptimization,groetschel1984ellipsoidmethodcombinatorialoptimization}.
In 2012, Chandler et al.~\cite{chandler2012probegraphclasses} conjectured that this holds even for {\sc Colouring} on partitioned probe perfect graphs.
Moreover, the following is known:

\begin{proposition}[\hspace{1sp}\cite{chang2005ontherecognition,BrettellOPPRL25}]\label{prop:probe-cw-mimw}
Let $\mathcal{G}$ be a class of graphs and let $w$ be a fixed integer.\\[-19pt]
\begin{enumerate}[(i)]
\item If $\mathcal{G}$ has clique-width at most $w$, then  $\mathcal{G}_p$ has clique-width at most $2w$.
\item If  $\mathcal{G}$ has mim-width at most $w$, then $\mathcal{G}_p$ has mim-width at most $2w$.
\end{enumerate}
\end{proposition}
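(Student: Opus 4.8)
The plan is to treat both parts through a single structural observation and then ``double'' the relevant representation. Note first that every $G\in\mathcal{G}_p$ arises from some witness $H:=G+F\in\mathcal{G}$ (with $F\subseteq\binom{N}{2}$) by deleting exactly the edges of $H$ that lie inside the independent set $N$. Equivalently, $G$ and $H$ have the same vertex set and agree on every edge incident to a probe, and they differ only in that $G$ has no edge inside $N$. The key fact I would isolate and reuse is: \emph{for any pair $u,v$ with $u\in P$, we have $uv\in E(G)$ if and only if $uv\in E(H)$}, since the only edges removed when passing from $H$ to $G$ join two non-probes.

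For part (i), I would take an optimal clique-width expression for $H$ using the labels $1,\dots,w$ and build an expression for $G$ on the doubled label set $\{1,\dots,w\}\cup\{1',\dots,w'\}$, reserving unprimed labels for probes and primed labels for non-probes. Each creation $i(v)$ is replaced by $i(v)$ if $v\in P$ and by $i'(v)$ if $v\in N$; disjoint unions are kept verbatim; each relabelling $\rho_{i\to j}$ becomes $\rho_{i\to j}\circ\rho_{i'\to j'}$; and each join $\eta_{i,j}$ (with $i\neq j$) becomes $\eta_{i,j}\circ\eta_{i,j'}\circ\eta_{i',j}$, \emph{deliberately omitting} $\eta_{i',j'}$. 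A short invariant check shows primed labels always carry non-probes and unprimed labels always carry probes, so this reproduces precisely the probe--probe and probe--non-probe edges of $H$ while never creating any edge between two non-probes. Hence the expression evaluates to $G$ and uses $2w$ labels, giving clique-width at most $2w$.

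For part (ii), I would reuse the \emph{same} branch decomposition $(T,\delta)$ of $H$ witnessing mim-width at most $w$ (legitimate since $V(G)=V(H)$). Fix the cut $(A,\bar A)$ induced by an edge of $T$ and let $M$ be a maximum induced matching in the bipartite graph of $G$-edges crossing the cut. Because $N$ is independent in $G$, every edge of $M$ has at least one endpoint in $P$, so I can split $M=M_A\cup M_B$ according to whether the $A$-endpoint of the edge is a probe ($M_A$) or a non-probe ($M_B$, which forces its $\bar A$-endpoint into $P$). Using the key fact, I would argue each of $M_A$ and $M_B$ is still an induced matching across the \emph{same} cut in $H$: any crossing edge between two $M_A$-matched (resp.\ $M_B$-matched) vertices that could violate induced-ness is incident to a probe, hence lies in $G$ as well, contradicting that $M$ is induced in $G$. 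Therefore $|M_A|,|M_B|\le\mathrm{mim}_H(A)\le w$, so $|M|\le 2w$; as the cut was arbitrary, $G$ has mim-width at most $2w$ under $(T,\delta)$.

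The step I expect to be the real obstacle is the induced-matching argument in part (ii): one cannot simply invoke monotonicity, because an induced matching of $G$ need \emph{not} remain induced in $H$ (the extra edges of $H$ inside $N$ can connect two matched vertices). The probe-anchored splitting is exactly what repairs this and is the reason one loses a factor of two rather than nothing; the analogous heart of part (i) is the decision to omit $\eta_{i',j'}$, which suppresses precisely the non-probe--non-probe edges. Both tricks are elementary once the common ``$G=H$ minus the edges inside $N$'' viewpoint is fixed, so beyond these two points the verification is routine bookkeeping.
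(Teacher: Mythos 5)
The paper does not prove Proposition~\ref{prop:probe-cw-mimw} at all --- it is imported from \cite{chang2005ontherecognition} (clique-width) and \cite{BrettellOPPRL25} (mim-width) --- so there is no in-paper proof to compare against. Your argument is correct and is essentially the standard one from those references: the label-doubling with the omitted $\eta_{i',j'}$ join reproduces exactly $G = H$ minus the edges inside $N$, and reusing the branch decomposition of $H$ while splitting an induced matching of $G$ into the part anchored at probes on the $A$-side and the part anchored at probes on the $\bar A$-side correctly yields two matchings that remain induced in $H$ (every potentially violating cross edge is incident to a probe, hence present in $G$ iff present in $H$), giving the factor $2$ in both bounds.
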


\noindent
Hence, as $\mathcal{G} \subseteq \mathcal{G}_p$ holds for every graph class $\mathcal{G}$, we obtain that  a graph class $\mathcal{G}$ has bounded mim-width (clique-width) if and only if $\mathcal{G}_p$ has bounded mim-width (clique-width). However, if the yes-instances for some problem $\Pi$ in ${\cal G}$ have bounded width, this may no longer hold for the yes-instances for $\Pi$ in ${\cal G}_p$. If we can still solve $\Pi$ on ${\cal G}_p$, this means there might be a deeper reason for the polynomial-time behaviour of $\Pi$ on ${\cal G}$. We will also research this.

\subsection*{Our Results}
As mentioned, we focus on \Col{}\ and $k$-{\sc Colouring} for probe $H$-free graphs.
Our first result is a full dichotomy of \Col{} on partitioned probe $H$-free graphs (for two graphs $G_1$ and $G_2$, we write $G_1\subseteq_i G_2$ if $G_1$ is an induced subgraph of $G_2$). 

\begin{theorem}[restate=ThmColouringProbe]\label{thm:col-probe}
For a graph~$H$, \ProbeCol{} is polynomial-time solvable for probe $H$-free graphs if $H\subseteq_i P_4$, and else it is \NP-complete even for partitioned probe $H$-free graphs. 
\end{theorem}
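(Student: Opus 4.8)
The statement is a dichotomy, so the plan is to treat the two directions separately and, crucially, to reduce the amount of new work by leaning on the known $H$-free dichotomy of Kr\'al et al.\ stated in the introduction. For $H\subseteq_i P_4$ I prove tractability; for $H\not\subseteq_i P_4$ I prove \NP-completeness, isolating the only cases that are genuinely new relative to the $H$-free setting.

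\textbf{Polynomial part ($H\subseteq_i P_4$).} First I would note that if $H\subseteq_i P_4$ then every $H$-free graph is $P_4$-free, hence every partitioned probe $H$-free graph is partitioned probe $P_4$-free; so it suffices to solve \ProbeCol{} on partitioned probe cographs. Given $(G,P,N)$, I would compute a cograph completion $G+F'$ with $F'\subseteq\binom{N}{2}$ in polynomial time (the recognition of probe $P_4$-free graphs is polynomial, and produces a witnessing completion); note we never use the ``true'' $F$, only the existence of some completion. From the cotree of $G+F'$ and \Cref{prop:probe-cw-mimw}(i) I would build a clique-width expression of $G$ itself of width at most $4$, by doubling labels to separate probes from non-probes and thereby suppress the edges of $F'$, which are absent in $G$. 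Since \ProbeCol{} is polynomial-time solvable for every fixed clique-width, this finishes the instance. Equivalently one can run a direct dynamic program on the cotree of $G+F'$: as $N$ is independent in $G$, a proper colouring of $G$ is a proper colouring of the cograph $G[P]$ in which each non-probe receives a colour missing from its (probe-only) neighbourhood, so $\chi(G)\in\{\chi(G[P]),\chi(G[P])+1\}$ and only one extra colour is ever at stake.

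\textbf{Hardness part ($H\not\subseteq_i P_4$).} Membership in \NP{} is immediate. For hardness I would split on the Kr\'al et al.\ dichotomy. If moreover $H\not\subseteq_i P_3+P_1$, then \ProbeCol{} is already \NP-complete for $H$-free graphs, and this carries over verbatim to partitioned probe $H$-free graphs. The remaining graphs satisfy $H\subseteq_i P_3+P_1$ but $H\not\subseteq_i P_4$; a short inspection of the induced subgraphs of $P_3+P_1$ shows these are exactly $3P_1$ and $P_3+P_1$. Since $3P_1\subseteq_i P_3+P_1$, every $3P_1$-free graph is $(P_3+P_1)$-free, so every partitioned probe $3P_1$-free graph is partitioned probe $(P_3+P_1)$-free. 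Hence a \emph{single} new reduction suffices: \ProbeCol{} is \NP-complete on partitioned probe $3P_1$-free graphs. Here I would first record a clean description of the instances: $(G,P,N)$ is partitioned probe $3P_1$-free essentially iff $\alpha(G[P])\le 2$ and each non-probe's set of non-neighbours in $P$ induces a clique of $G[P]$ (the intra-$N$ and mixed constraints can always be satisfied by completing $N$ to a clique inside $F$, which does not affect the colouring of $G$). Analysing the colour classes of such a $G$ shows each class is either a non-edge of $G[P]$ (covering two probes and no non-probe) or one probe together with a compatible set of non-probes (or non-probes only); minimising the number of classes becomes a constrained matching-plus-assignment problem on the triangle-free graph $\overline{G[P]}$. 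I would encode \ExactThreeCover{} into exactly this optimisation, letting the probes carry a palette together with the ground set (laid out so that $\alpha(G[P])\le 2$) and the non-probes encode the triples, with probe--non-probe adjacencies arranged so that hitting the target number of colours corresponds precisely to an exact cover.

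\textbf{Main obstacle.} The crux is this gadget, because its two defining features pull against each other: the probes must have independence number at most two, i.e.\ form a dense, co-triangle-free graph, while the vertices meant to carry the combinatorial choice---the non-probes---are forced to be pairwise non-adjacent, so no constraint can be encoded as a direct edge between two ``choice'' vertices. All constraints must instead be routed through the probes and through the scarcity of colours, and at the same time one must certify that the constructed instance really is probe $3P_1$-free by exhibiting an explicit completion $F$ with $\alpha(G+F)\le 2$. Getting these three requirements---correctness of the colour count, independence of the non-probes, and existence of the completion---to hold simultaneously is where I expect the real difficulty to lie.
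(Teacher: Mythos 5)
Your overall architecture coincides with the paper's: reduce the positive side to probe $P_4$-free graphs and invoke bounded clique-width via Proposition~\ref{prop:probe-cw-mimw}(i), and reduce the negative side, via the Kr\'al et al.\ dichotomy, to a single new \NP-hardness proof for partitioned probe $3P_1$-free graphs (which subsumes $P_3+P_1$ since $3P_1$-free graphs are $(P_3+P_1)$-free). Your positive direction is sound; the only difference is that the paper does not need a witnessing completion $F'$ at all --- it applies the generic clique-width approximation algorithm to $G$ directly to obtain a $15$-expression --- whereas your route additionally relies on the probe-$P_4$-free recognition algorithm outputting an explicit cograph completion, a dependency you should either verify or avoid. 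Your characterisation of partitioned probe $3P_1$-free instances ($\alpha(G[P])\le 2$ and each non-probe's non-neighbourhood in $P$ a clique, with $F=\binom{N}{2}$ always an optimal completion) is also correct.

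The genuine gap is the hardness gadget itself, which is the only new mathematical content of the theorem and which you explicitly leave as an unresolved ``main obstacle'': you describe desiderata (probes carrying a palette and the ground set, non-probes carrying the triples, a matching-plus-assignment analysis of colour classes) but give no construction and no correctness proof, so the \NP-completeness half is not established. The three requirements you identify as pulling against each other are resolved in the paper by not designing the dense instance directly: one builds the \emph{complement} world first. From an \ExactThreeCover{} instance $(X,\mathcal{S})$ with $|X|=3k$ and $|\mathcal{S}|=s$, form $G$ on $X\cup Y\cup Z$ where $X$ is a clique of ground elements, $Y=\{y_S\}$ is independent, $Z$ is an independent padding set of size $s-k$, $Y$ is complete to $Z$, $X$ is anticomplete to $Z$, and $xy_S\in E$ iff $x\in S$; a counting argument shows $V(G)$ is coverable by $s$ disjoint cliques iff the cover instance is a yes-instance. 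Then $\overline{G}$ with probes $Y\cup Z$ and non-probes $X$ is the desired \Col{} instance with target $s$: the completion is simply $F=\binom{X}{2}$, after which $X\cup Z$ and $Y$ are two cliques covering $\overline{G}+F$, certifying $3P_1$-freeness for free. Note also that this puts the ground set on the \emph{non-probes} and the triples on the probes --- the opposite of the assignment you propose --- and it is exactly this placement that makes the completion certificate trivial. Without this (or an equivalent) explicit gadget and its correctness argument, your proof of the theorem is incomplete.
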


\noindent
By Theorem~\ref{thm:col-probe}, \Col{} is \NP-complete for partitioned probe $3P_1$-free graphs, while \Col{} is even polynomial-time solvable on $(P_3+P_1)$-free graphs~\cite{kral2001complexity}.
It is known that the class of $H$-free graphs has bounded mim-width~\cite{brettel2022boundingmimwidth} if and and only if it has bounded clique-width (see e.g.~\cite{dabrowski2016cliquewidth}) if and only if $H$ is an induced subgraph of $P_4$.
Hence, Theorem~\ref{thm:col-probe} also implies, together with Proposition~\ref{prop:probe-cw-mimw},  that \ProbeCol{} on (not necessarily partitioned) probe $H$-free graphs is solvable in polynomial time exactly when the mim-width or clique-width is bounded. In fact, we apply Proposition~\ref{prop:probe-cw-mimw} to show the first part of Theorem~\ref{thm:col-probe}, while for the second part we modify a known hardness reduction for \Col{}~\cite{blanche2019hereditary}; see Section~\ref{app:prop-col-probe}.

Our second result is a full  dichotomy for \kCol{$3$} on partitioned probe 
$P_t$-free~graphs:

\begin{theorem}[restate=ThmcolouringProbePFive]\label{thm:3col-probe-p5}
For an integer $t\geq 1$, \ProbekCol{$3$} on partitioned probe $P_t$-free graphs is polynomial-time solvable if $t\leq 5$ and \NP-complete if $t\geq 6$.
\end{theorem}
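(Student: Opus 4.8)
The plan is to prove the dichotomy by reducing each direction to a single boundary case. For the tractable side, note that for $t\le 4$ we have $P_t\subseteq_i P_4$, so every partitioned probe $P_t$-free graph is also a partitioned probe $P_4$-free graph; hence \kCol{3} is already polynomial-time solvable here by Theorem~\ref{thm:col-probe}, since \ProbeCol{} (and a fortiori its fixed-$k$ restriction \kCol{3}) is polynomial-time solvable on probe $P_4$-free graphs. The only genuinely new tractable case is therefore $t=5$, which is exactly the paper's main algorithmic result; in assembling the theorem I would simply invoke that algorithm. For the hard side, observe that $P_6\subseteq_i P_t$ for every $t\ge 6$, so any partitioned probe $P_6$-free graph is a partitioned probe $P_t$-free graph; it thus suffices to prove \NP-completeness for $t=6$, and this transfers to all $t\ge 6$.

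For the \NP-completeness at $t=6$ I would reduce from \kCol{3} on general graphs, which is \NP-complete~\cite{garey1979computers}. Given an instance $G'$, I would build a partitioned probe $P_6$-free graph $(G,P,N)$ such that $G$ is $3$-colourable if and only if $G'$ is, and I would \emph{exhibit} a certifying edge set $F\subseteq\binom{N}{2}$ (membership in the class, as opposed to running the algorithm, requires establishing that such an $F$ exists). The key structural constraint to keep in mind is that $G[P]=(G+F)[P]$ is an induced subgraph of the $P_6$-free graph $G+F$, so $G[P]$ must itself be $P_6$-free; since \kCol{3} is polynomial-time solvable on $P_6$-free graphs~\cite{BCMSSZ18}, the hardness cannot come from $G[P]$ alone and must be injected through the non-probes.

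Each non-probe $n$ contributes precisely a ``not-rainbow'' constraint: in any $3$-colouring of $G$ the colours appearing on $N_G(n)\subseteq P$ must omit at least one colour, as otherwise $n$ is uncolourable. I would exploit this together with ordinary edges inside $P$, which realise ``$\neq$'' constraints, to build two gadgets. Inequality gadgets come directly from single $P$-edges. Equality gadgets use a probe $w$ adjacent to $u$ and $v$ together with a non-probe adjacent to $\{u,v,w\}$: since $w\neq u$ and $w\neq v$, the not-rainbow triple $\{u,v,w\}$ collapses to $\mathrm{col}(u)=\mathrm{col}(v)$. With equality available I can duplicate each vertex of $G'$ and place the inequality edges on fresh private copies, so that the graph formed by the inequality edges alone is, say, a matching, and hence $P_6$-free.

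The step I expect to be the main obstacle is reconciling this encoding with the two $P_6$-freeness requirements at once. Laid out naively, the length-two equality-gadget paths splice together with an inequality edge into induced six-vertex paths inside $P$ (for instance $p_u\,w_u\,p_u'\,p_v'\,w_v\,p_v$), so $G[P]$ fails to be $P_6$-free; the real work is to ``compress'' the construction—attaching all gadgets to a bounded dominating clique or colour palette, or sharing gadget vertices—so that $G[P]$ becomes $P_6$-free. Simultaneously I must choose $F$, for example turning $N$ into a clique, so that in $G+F$ every induced path uses at most two consecutive non-probes, and then verify that no induced $P_6$ survives through these one- or two-vertex non-probe bridges. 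Making the gadgets local enough that both conditions hold, while preserving the correctness of the reduction, is where essentially all the difficulty lies.
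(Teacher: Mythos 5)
Your assembly of the polynomial side is correct and matches the paper: the $t\leq 5$ cases all reduce to the partitioned probe $P_5$-free algorithm (a probe $P_t$-free graph for $t\leq 5$ is in particular probe $P_5$-free), and your detour through the $P_4$ case for $t\leq 4$ is harmless. Likewise, restricting the hardness side to $t=6$ is the right move.

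The gap is in the \NP-completeness proof for $t=6$, and it is exactly the one you flag yourself: your gadget-based reduction from \kCol{$3$} is left as a plan whose central step --- making $G[P]$ $P_6$-free while simultaneously choosing $F$ so that $G+F$ is $P_6$-free --- is not carried out. As you note, the naive layout of equality and inequality gadgets already produces induced six-vertex paths inside $P$, and no concrete ``compression'' is given; without it there is no reduction. The paper sidesteps this difficulty entirely by changing the source problem: it reduces from \PreExt{} restricted to bipartite graphs with three precoloured vertices on the same side (\NP-complete by Bodlaender et al.), and uses Cai's gadget, which adds only a triangle on the three precoloured vertices $v_1,v_2,v_3$. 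The probe structure then comes for free from the bipartition $(A,B)$: take $P=A$, $N=B$, and $F=\binom{B}{2}$. In $G'+F$ the set $B$ is a clique (so any induced path meets it in at most two vertices) and $A$ is independent apart from one triangle, which kills every induced $P_6$ (indeed every $2P_3$ and $3P_2$), while correctness is immediate because the triangle forces $v_1,v_2,v_3$ to receive three distinct colours. If you want to complete your write-up, either work out the compression you describe or switch to a precolouring-extension source whose instances are already bipartite, so that the non-probe side and the certifying edge set $F$ are handed to you by the bipartition rather than engineered gadget by gadget.
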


\noindent
In Section~\ref{sec:probe-p5} we prove the polynomial part of Theorem~\ref{thm:3col-probe-p5} by giving our {\bf main result}: {a polynomial-time algorithm for \kCol{$3$} for partitioned probe $P_5$-free graphs}. The class of $P_5$-free graphs has been extensively studied for many well-known graph problems, yielding polynomial-time algorithms not only for $k$-{\sc Colouring} for any $k\geq 1$~\cite{HKLSS10}, but also  {\sc Vertex Cover}~\cite{LVV14}, {\sc Feedback Vertex Set}~\cite{ACPRS24}, {\sc Independent Feedback Vertex Set}~\cite{BDFJP19} and very recently, {\sc Odd Cycle Transversal}~\cite{ALLRSS}, or more generally, {\sc Maximum Partial List $H$-Coloring} for every fixed graph~$H$~\cite{LRSSZ}. Our polynomial-time result for $3$-{\sc Colouring} on partitioned probe $P_5$-free graphs is the {\it first} result that generalizes a known polynomial-time result for a (classical) graph problem on $P_5$-free graphs to partitioned probe $P_5$-free graphs.

In Section~\ref{sec:probe-p6} we prove the second part of Theorem~\ref{thm:3col-probe-p5}. In fact, we show that \ProbekCol{$3$} is \NP-complete even on partitioned probe $(P_6,2P_3,3P_2)$-free graphs.
In contrast, \kCol{$3$} is polynomial-time solvable even on $P_7$-free graphs~\cite{BCMSSZ18} and $sP_2$-free graphs for all $s\geq 1$~\cite{DLRR12}. 

In Section~\ref{s-con}, we point out {\it many} natural directions for future research. In particular, we determine all (disconnected) graphs $H$ for which \kCol{$3$} on probe partitioned $H$-free graphs is still open and solve one such open case, namely when $H=P_3+sP_1$. Moreover, we consider \kCol{$k$} for $k\geq 4$ and solve one open case, namely when $H=P_2+sP_1$ for $s\geq 1$, by proving that for every $s\geq 0$, all probe $(P_2+sP_1)$-free graphs are $(s+1)P_2$-free.

\vspace*{-3mm}
\subparagraph{Proof Ideas behind Our Main Result}
As evidenced by the $C_5$, there are $P_5$-free graphs that are not perfect, and there exist three different proofs for showing that $3$-{\sc Colouring} is polynomial-time solvable on $P_5$-free graphs. As discussed below, these proofs are not applicable to probe $P_5$-free graphs, even though two of them provide good starting points.

First, in the proof of Brettell et al.~\cite{brettel2022listkcolouring}, it was shown that $k$-colourable $P_5$-free graphs have {\it bounded mim-width} for all $k\geq 1$, directly implying that $k$-{\sc Colouring} becomes polynomial-time solvable~\cite{BTV13}. However, bipartite graphs are even $2$-colourable and readily seen to be even probe $2P_2$-free (as shown below explicitly for paths) while already chordal bipartite graphs have not only unbounded mim-width~\cite{BCM15} but even {\it unbounded sim-width}~\cite{BBMP24}. This also shows that probe $3$-colourable $P_5$-free graphs, which have bounded mim-width due to Proposition~\ref{prop:probe-cw-mimw}, are only a small subclass of  the class of $3$-colourable probe $P_5$-free graphs.

Second, in the proof of Randerath, Schiermeyer and Tewes~\cite{RST02}, a connected $3$-colourable $P_5$-free graph is shown to have a dominating set~$D$ of constant size. By precolouring~$D$ in every possible way, polynomially many instances of $2$-{\sc List Colouring} (i.e., where all lists of admissible colours have size~$2$) are obtained. As {\sc $2$-List Colouring} can be formulated as $2$-SAT, it is polynomial-time solvable~\cite{Edwards86}. This approach fails for probe $P_5$-free graphs. To see this, let $G$ be a path $u_1u_2\ldots u_{2n}$. Let $P=\{u_1,u_3,\ldots,u_{2n-1}\}$ and $N=
\{u_2,u_4,\ldots,u_{2n}\}$. Note that $P$ and $N$ are independent. Hence, making $N$ a clique yields a split graph, which is $2P_2$-free, so $G$ is even probe $2P_2$-free. However, for large $n$, $G$ has no small dominating set.

Finally, in the proof of Woeginger and Sgall~\cite{WS01}, it is first shown that every $(C_3,P_5)$-free graph $G$ is $3$-colourable. Next, the case where $G$ contains at least one triangle~$C$ (which may not be dominating) is considered. After colouring $C$, the following rule is applied exhaustively: whenever a vertex~$v$ has two neighbours not coloured alike, give $v$ the third colour. Afterwards, it is again shown that an instance of $2$-{\sc List Colouring} is obtained.
Also this approach does not work for probe $P_5$-free graphs due to their more intricate structure. By extending the arguments as in~\cite{WS01}, we can show  that even all $C_3$-free probe $P_5$-free graphs, which include all (probe) $(C_3,P_5)$-free graphs, are
$3$-colourable (see Section~\ref{sec:probe-p5}).  However, $C_3$-free probe $P_5$-free graphs form only a small subclass of $3$-colourable probe $P_5$-free graphs.

As we explain below, our proof for partitioned probe $P_5$-free graphs turns out to be substantially more involved than the second and third proofs for $P_5$-free graphs and does not rely on the boundedness of some width parameter.

First, if all connected components of the subgraph induced by the probes are bipartite, we can colour the probes with colours~$1$ and~$2$ and use colour~$3$ for the non-probes (as the non-probes form an independent set). Next, we show that at most one connected component~$K$ of the subgraph induced by the set of probes can be non-bipartite. Then, by $P_5$-freeness, we find that $K$ has a short odd cycle $C$. The fact that in general, $C$ is not a dominating cycle of the graph causes several complications. To overcome these complications,  we branch on the colours of $C$ and then just like~\cite{WS01}, we try to extend the colouring as much as possible using propagation rules, with the aim to reduce eventually to polynomially many instances of \textsc{$2$-List Colouring}. We show that  the latter is possible via (i) a new decomposition of  probe $P_5$-free graphs, which carefully takes into account the fact that we do not know the missing edges between the non-probes that make the graph $P_5$-free (as illustrated in Figure~\ref{fig:partitionKandM}) and (ii) an adaptation of the standard 2-SAT formula for {\sc $2$-List Colouring}.

\section{Preliminaries}\label{sec:pre}

Let $G$ be a graph, and $k$ be a positive integer.
The \emph{order} of $G$ is its number of vertices, and the \emph{size} of $G$ is its number of edges.
For a vertex $v$ of $G$, we denote its \emph{(open) neighbourhood} by $N_G(v)$, and its \emph{closed neighbourhood} by $N_G[v] = N_G(v) \cup \{v\}$.
For a set $S$ of vertices of $G$, let $N_G[S] = \bigcup_{v \in S} N_G[v]$, and $N_G(S) = N_G[S] \setminus S$.
A vertex $v \notin S$ is \emph{complete} to a set of vertices $S$ if $v$ is adjacent to every vertex of $S$, and $v$ is \emph{anticomplete} to $S$ if $v$ is not adjacent to any vertex of $S$.
Let $S'$ be another set of vertices of $G$ that is disjoint to $S$. If every vertex of $S$ is complete (anticomplete) to $S'$, then $S$ is {\it complete} ({\it anticomplete}) to $S'$.
We write $G[S]$ for the subgraph of $G$ induced by $S$.
For two vertex-disjoint graphs $G_1$ and $G_2$, we let $G_1+G_2$ denote their disjoint union, which is the graph $(V(G_1)\cup V(G_2), E(G_1)\cup E(G_2))$. 
For a graph $G$ and integer $s \geq 1$, $sG$ denotes the disjoint union of $s$ copies of $G$.

For a graph~$H$, we say that $G$ is \emph{$H$-free} if there is no set of vertices $S$ such that $G[S]$ is isomorphic to $H$.
We say that $G$ is \emph{probe $H$-free} if there is a partition of the vertices of $G$ into a set of probes $P$ and a set of non-probes $N$, such that $N$ is independent in $G$, and there is a set of edges $F \subseteq {N \choose 2}$ such that $G+F$ is $H$-free.
Note that $G[P]$ is $H$-free if $G$ is probe $H$-free.
A \emph{partitioned probe $H$-free graph} is a triple $(G,P,N)$, where $G$ is a probe $H$-free graph with $P$ as the probes and $N$ as the non-probes, that is, the sets of probes and non-probes are given.
For a set $\{H_1,\ldots,H_r\}$ of graphs, a graph $G$ is \emph{$(H_1,\ldots,H_r)$-free} if $G$ is $H_i$-free for every $i\in \{1,\ldots,r\}$.
A graph $G$ is {\it probe $(H_1,H_2,\ldots)$-free} if there is an independent set $N$ of non-probes in $G$ and a set of edges $F \subseteq {N \choose 2}$ such that $G+F$ is $(H_1,H_2,\ldots)$-free.
In a partitioned probe $(H_1,H_2,\ldots)$-free graph $(G,P,N)$, the graph $G$ is probe $(H_1,H_2,\ldots)$-free with set of probes $P$ and set of non-probes $N$.

We define $[k] = \{1, \dots, k\}$.
A \emph{partial $k$-colouring} of $G$ is a function $\psi: V(G) \rightarrow [k] \cup \{\perp\}$ such that, if $uv \in E(G)$ with $\psi(u),\psi(v) \in [k]$, then $\psi(u) \neq \psi(v)$.
If $v$ is a vertex of $G$ with $\psi(v) \in [k]$, then $v$ is \emph{coloured} (under $\psi$).
Let $\psi'$ be another partial $k$-colouring of $G$.
Then $\psi'$ is an \emph{extension} of $\psi$ if $\psi(v) \in [k]$ implies that $\psi'(v) = \psi(v)$.
A \emph{$k$-colouring} of $G$ is a partial $k$-colouring under which every vertex of $G$ is coloured.
For $S \subseteq V(G)$, we define $\psi(S) = \{\psi(v) : v \in S\}$.

Algorithm~\ref{alg:prop} is a simple colour propagation algorithm that is essential to the proof of Theorem~\ref{thm:3col-probe-p5}. The following properties of Algorithm~\ref{alg:prop} are easy and their proofs are omitted:

\begin{algorithm}
\SetKw{GoTo}{goto}
\DontPrintSemicolon
\KwIn{A graph $G$, and a partial $k$-colouring $\psi$.}
\KwOut{An extension of $\psi$, or an error.}
\medskip
\tcp{Propagation Rule}
\While{there is an uncoloured vertex $v \in V(G)$ and $i \in [k]$ such that $v$ has a neighbour of every colour except colour $i$, that is, $[k] \setminus \{i\} \subseteq \psi(N_G(v)) \subseteq ([k] \setminus \{i\}) \cup \{\perp\}$}{
		set $\psi(v) \gets i$\;
}
\smallskip
\ForAll{$v \in V(G)$}{
	\If{$v$ has a neighbour of every colour, that is, $[k] \subseteq \psi(N_G(v))$}{
		\Return{an error}
	}	
}
\smallskip
\Return{$\psi$}
\caption{Simple colour propagation.}\label{alg:prop}
\end{algorithm}

\begin{lemma}\label{lem:prop}
Let $\psi$ be a partial $k$-colouring of a graph $G$.
\begin{enumerate}[(i)]
\item If Algorithm~\ref{alg:prop} on $(G, \psi)$ returns an extension $\psi'$ of $\psi$ and $v \in V(G)$ is coloured under $\psi'$, then $v$ has the same colour under any $k$-colouring of $G$ that is an extension of $\psi$ (if any exist).
\item If Algorithm~\ref{alg:prop} on $(G, \psi)$ returns an error, then there is no $k$-colouring of $G$ that is an extension of $\psi$.
\item Algorithm~\ref{alg:prop} runs in polynomial time.
\end{enumerate}
\end{lemma}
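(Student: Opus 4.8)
The plan is to prove the three parts in the order stated, with part~(i) carrying essentially all of the content and parts~(ii) and~(iii) following from it with little extra work. The conceptual core is the observation that every colour the propagation loop assigns is \emph{forced}: the while-guard colours an uncoloured vertex $v$ with $i$ precisely when $v$ already has neighbours of all colours except $i$, so in any valid extension $v$ has no choice but to take colour $i$. Establishing this rigorously for the whole run is the one point that needs care.

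For part~(i), I would use strong induction on $j$ to show that any $k$-colouring $c$ of $G$ extending $\psi$ (assuming one exists) agrees with the algorithm on each vertex $v_j$ coloured by the loop, where $v_1,\dots,v_m$ are listed in the order they are coloured and $v_j$ receives colour $i_j$. At the moment $v_j$ is assigned, the while-guard gives $[k]\setminus\{i_j\}\subseteq\psi(N_G(v_j))$. Each coloured neighbour of $v_j$ at that moment is either a vertex already coloured in the original $\psi$, on which $c$ agrees because $c$ extends $\psi$, or some $v_\ell$ with $\ell<j$, on which $c$ agrees by the induction hypothesis (when $j=1$, only the former case occurs). Hence $c$ exhibits all colours except $i_j$ on $N_G(v_j)$, which forces $c(v_j)=i_j$. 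Since the vertices coloured under the returned $\psi'$ are exactly those of $\psi$ together with $v_1,\dots,v_m$, this yields the claim. The step I expect to be the main (and only) obstacle is keeping this bookkeeping honest, namely verifying that the neighbours witnessing the ``all colours but one'' condition are precisely the vertices whose colours are already pinned down, so that their values transfer verbatim to an arbitrary extension $c$.

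For part~(ii), suppose the algorithm returns an error because some $v$ satisfies $[k]\subseteq\psi'(N_G(v))$ for the final colouring $\psi'$, and suppose for contradiction that a $k$-colouring $c$ extending $\psi$ exists. By part~(i), every neighbour of $v$ coloured under $\psi'$ retains its colour in $c$, so $c$ still shows all $k$ colours on $N_G(v)$, leaving no colour available for $v$ and contradicting that $c$ colours $v$; here one also notes that the error vertex is necessarily uncoloured, since the loop only ever gives a vertex a colour absent from its coloured neighbourhood, so $\psi'$ remains a valid partial colouring throughout. Finally, part~(iii) is immediate: each iteration of the while-loop colours one previously uncoloured vertex and no vertex is ever uncoloured, so the loop runs at most $|V(G)|$ times, while locating a vertex that satisfies the guard and performing the closing error scan both take time polynomial in the size of $G$.
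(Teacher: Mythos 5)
Your proof is correct. The paper omits the proof of this lemma entirely (it states that the properties "are easy and their proofs are omitted"), and your argument -- the strong induction on the order in which the while-loop colours vertices, showing each assignment is forced in any extension, with (ii) and (iii) falling out as corollaries -- is precisely the standard argument the authors intend. The only cosmetic point is that in part~(ii) you invoke "part~(i)" in a situation where the algorithm returns an error rather than an extension; strictly you are invoking the forced-colour induction applied to the state of $\psi$ at the end of the while-loop, which holds in both cases, so nothing is actually missing.
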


We use the following well-known lemma, which is due to Edwards~\cite{Edwards86}. We provide a proof to adapt it later.

\begin{lemma}\label{lem:ext}
Given a graph $G$ and a partial $k$-colouring $\psi$ of $G$, for every uncoloured vertex $v \in V(G)$, define the set of available colours of $v$ as $L(v) = [k] \setminus \psi(N_G(v))$.
If $|L(v)| \leq 2$ for every uncoloured vertex $v \in V(G)$, then deciding if there is a $k$-colouring that is an extension of $\psi$ is possible in polynomial time.
\end{lemma}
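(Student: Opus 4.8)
The plan is to reduce the problem to an instance of \textsc{2-Sat}, following the classical idea that a list-colouring constraint with lists of size at most~$2$ can be encoded as a Boolean formula in conjunctive normal form with clauses of width~$2$. The key observation is that once $|L(v)|\le 2$ for every uncoloured vertex, each such vertex has at most two admissible colours, so its final colour can be recorded by a single Boolean variable.

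First I would introduce, for each uncoloured vertex $v$, a Boolean variable $x_v$ whose two truth values correspond to the (at most two) colours in $L(v)$; if $|L(v)|=1$ the colour of $v$ is forced, and if $|L(v)|=0$ then no extension exists and we may reject immediately. For every edge $uv\in E(G)$ between two uncoloured vertices, I would add clauses forbidding the (at most one) pair of identical colour assignments that would violate properness: each forbidden combined assignment of $(x_u,x_v)$ is ruled out by a single $2$-clause, and there is at most one such forbidden combination per shared colour. Edges incident to an already-coloured vertex are automatically consistent, since we defined $L(v)$ to exclude the colours appearing on coloured neighbours. I would then argue the equivalence: a satisfying truth assignment to the resulting $2$-\textsc{Cnf} formula yields, by reading off the colour encoded by each $x_v$, an extension of $\psi$ to a proper $k$-colouring, and conversely any such extension induces a satisfying assignment. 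Finally, since \textsc{2-Sat} is solvable in linear time~\cite{Edwards86} and the formula has $O(|V(G)|)$ variables and $O(|E(G)|)$ clauses, the whole procedure runs in polynomial time.

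The only genuinely delicate point is bookkeeping the correspondence between the two colours of $L(v)$ and the two truth values of $x_v$ consistently across all clauses, so that ``$u$ and $v$ receive the same colour'' translates into a well-defined literal pattern; this is routine but must be stated carefully. I expect the main obstacle to be purely notational rather than mathematical: handling the degenerate cases $|L(v)|\in\{0,1\}$ uniformly, and verifying that the at-most-one forbidden pair per edge genuinely corresponds to a single $2$-clause. Because the paper remarks that this lemma will be \emph{adapted} later, I would present the \textsc{2-Sat} construction explicitly rather than merely invoking Edwards' theorem as a black box, so that the later generalisation (presumably allowing more than two colours on the non-probes, or additional structural constraints) can reuse the same clause template.
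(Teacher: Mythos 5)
Your proposal is correct and follows essentially the same route as the paper: a reduction to \textsc{2-Sat}, which the paper implements with a variable $x_v^i$ for each uncoloured vertex $v$ and each colour $i\in L(v)$, an at-least-one clause per vertex, and a conflict clause per edge and shared colour. Your single-variable-per-vertex encoding is an equivalent cosmetic variant; the paper's per-colour variables merely make the later adaptation (adding clauses forcing two vertices to receive the same colour) slightly more direct.
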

\begin{proof}
Let the SAT formula $\mathcal{F}$ in conjunctive normal form have variables $x_v^i$ for every uncoloured vertex $v \in V(G)$ and every $i \in L(v)$, and clauses
\begin{itemize}
\item $\bigvee_{i \in L(v)} x_v^i$ for every uncoloured vertex $v$ (i.e., if $L(v) = \emptyset$, then $\mathcal{F}$ is not satisfiable) and
\item $\bar{x}_u^i \vee \bar{x}_v^i$ for every $uv \in E(G)$ with uncoloured vertices $u$ and $v$ and $i \in L(u) \cap L(v)$.
\end{itemize}
According to the assumptions $\mathcal{F}$ is a 2-SAT formula.
By construction, there is a $k$-colouring of $G$ that is an extension of $\psi$ if and only if $\mathcal{F}$ is satisfiable.
This completes the proof since deciding the satisfiability of a 2-SAT formula is possible in polynomial time~\cite{aspvall1979lineartime}.
\end{proof}

\section{The Proof of Theorem~\ref{thm:col-probe}}\label{app:prop-col-probe}

In this section we show Theorem~\ref{thm:col-probe}. The proof of our next result is based on an existing construction from~\cite{blanche2019hereditary}.

\begin{proposition}\label{prp:col-probe-p1p3}
\ProbeCol{} is \NP-complete on partitioned probe $3P_1$-free graphs.
\end{proposition}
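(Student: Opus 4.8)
The plan is to first note that membership in \NP{} is immediate, and then to reduce from a known \NP-complete \Col{} instance, reusing and adapting the construction of~\cite{blanche2019hereditary}. Before giving the reduction, I would isolate where the hardness must come from. Since $N$ is an independent set, every non-probe is adjacent only to probes, so any proper $k$-colouring $\psi$ of $G[P]$ can be greedily extended to $N$ by giving each non-probe $n$ a colour of $[k]$ not used on $N_G(n)\subseteq P$, provided such a colour exists. Hence $G$ is $k$-colourable if and only if $G[P]$ admits a proper $k$-colouring $\psi$ under which every non-probe has a non-rainbow neighbourhood, i.e.\ for each $n\in N$ some colour of $[k]$ is absent from $\psi(N_G(n))$. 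As $G[P]$ is $3P_1$-free, it has independence number at most $2$, so colouring $G[P]$ on its own is easy; all of the difficulty is forced into the absent-colour constraints contributed by the large independent set $N$. This reformulation is what makes it plausible that \Col{} is hard here even though it is easy on $3P_1$-free graphs.

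For the reduction itself, I would take the graph $G$ produced by the \Col-hardness construction of~\cite{blanche2019hereditary}, which yields instances of bounded independence number, and designate a partition $(P,N)$ of its vertex set rather than altering the chromatic number. Concretely, I would identify (augmenting the construction with a few gadget vertices if necessary, in a colour-neutral way that does not change $k$-colourability) an independent set $N$ that meets every independent triple of $G$ in at least two vertices, and set $P=V(G)\setminus N$. The certificate that $(G,P,N)$ is a partitioned probe $3P_1$-free graph is then simply $F=\binom{N}{2}$: making $N$ a clique leaves only independent triples with at most one vertex in $N$, and by the choice of $N$ there are none, so $\alpha(G+F)\le 2$ and $G+F$ is $3P_1$-free. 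Equivalently, one checks the two conditions $\alpha(G[P])\le 2$ and, for every $n\in N$, that $P\setminus N_G(n)$ is a clique of $G[P]$.

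Since $G$ as a graph is (up to the colour-neutral gadget) exactly the hard instance of~\cite{blanche2019hereditary}, its $k$-colourability coincides with that of the source instance, which gives correctness in both directions; alternatively one argues directly through the reformulation above. I expect the \emph{main obstacle} to be the simultaneous requirement that the modified construction remain a faithful \Col-reduction (preserving the chromatic number) while admitting an independent non-probe set hitting every independent triple at least twice — in other words, controlling the independent triples of the construction well enough to place two of their vertices into a single independent set without disturbing colourability. Verifying condition (b), that each non-probe is non-adjacent only to a clique of probes, is the concrete form this obstacle takes, and is the step I would check most carefully.
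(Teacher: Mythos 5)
Your high-level plan coincides with the paper's: reuse the gadget of Blanch\'e et al., designate an independent set $N$ as the non-probes, and certify probe $3P_1$-freeness via $F=\binom{N}{2}$, checking that every independent triple of the graph meets $N$ at least twice (equivalently, $\alpha(G[P])\le 2$ and $P\setminus N_G(n)$ is a clique for each $n\in N$). Your opening reformulation via non-rainbow neighbourhoods is also correct, though it ends up playing no role in the reduction.

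However, the proposal stops exactly at the step that constitutes the proof. You never describe the construction, never exhibit $N$, and explicitly defer the verification that such an $N$ exists (``the step I would check most carefully''), hedging with a possible colour-neutral augmentation. The paper does this work concretely: it reduces from \ExactThreeCover{}, builds a graph $G$ on $X\cup Y\cup Z$ in which $X$ is a clique, $Y$ and $Z$ are independent, $Y$ is complete to $Z$, $X$ is anticomplete to $Z$, and $xy_S\in E$ iff $x\in S$; shows that $(X,\mathcal S)$ is a yes-instance iff $V(G)$ is covered by $s$ disjoint cliques; and then passes to the complement, taking $N=X$. No augmentation is needed: $\overline{G}+\binom{X}{2}$ is the union of the two cliques $X\cup Z$ and $Y$, hence $3P_1$-free, and one checks that every independent triple of $\overline{G}$ indeed has at least two vertices in $X$. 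A further slip: you describe the source construction as having ``bounded independence number,'' which cannot be right --- if the instances were $3P_1$-free (independence number at most $2$) they would lie in a class where \Col{} is polynomial-time solvable by Kr\'al et al., and indeed the complement graph used here contains the independent set $X$ of unbounded size $3k$; the whole point is that this large independent set is hidden inside $N$. So the approach is the right one, but as written the proposal is a programme rather than a proof, with the load-bearing verification missing.
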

\begin{proof}
Clearly, \ProbeCol{} is in \NP. 
Our \NP-hardness reduction is the same as the one Blanch\'{e} et al.~\cite[Theorem~6]{blanche2019hereditary} used to prove that \Col{} is \NP-complete for, amongst others, $(P_6,\overline{P_6})$-free graphs (where $\overline{P_6}$ is the complement of $P_6$). We must repeat their gadget below in order to show that it is a probe $3P_1$-free graph.
We use the \ExactThreeCover{} problem, which is well known to be \NP-complete~\cite{garey1979computers}.
\begin{framed}
\noindent
\ExactThreeCover{}\\
\textit{Input:} A finite set $X$ and a collection $\mathcal{S}$ of 3-subsets of $X$.\\
\textit{Question:} Is there a subcollection $\mathcal{S}' \subseteq \mathcal{S}$ such that each element of $X$ occurs in exactly one subset in $\mathcal{S}'$?
\end{framed}

\begin{figure}[b]
\centering
\begin{tikzpicture}
\foreach \i in {1,...,6}{
	\node[dot,label=above:$\i$] (x\i) at (\i,2) {};
};
\node[dot] (z1) at (3.5,0) {};

\foreach [count=\i] \s in {123,456,235}{
	\node[dot] (y\i) at (2*\i-0.5,1) {};
	\draw (z1) to (y\i);
};

\draw[rounded corners] (0.5,1.666) rectangle (6.5,2.666);
\node at (7,2) {$X$};
\draw[rounded corners] (0.5,0.666) rectangle (6.5,1.333);
\node at (7,1) {$Y$};
\draw[rounded corners] (0.5,-0.333) rectangle (6.5,0.333);
\node at (7,0) {$Z$};

\draw (y1) to (x1);
\draw (y1) to (x2);
\draw (y1) to (x3);
\draw (y2) to (x2);
\draw (y2) to (x3);
\draw (y2) to (x5);
\draw (y3) to (x4);
\draw (y3) to (x5);
\draw (y3) to (x6);
\end{tikzpicture}
\caption{
The graph $G$ constructed form the instance $([6], \{\{1,2,3\},\{2,3,5\},\{4,5,6\}\})$ of \ExactThreeCover{}.
We omitted drawing the edges of the clique $X$.
}
\label{fig:ETC}
\end{figure}
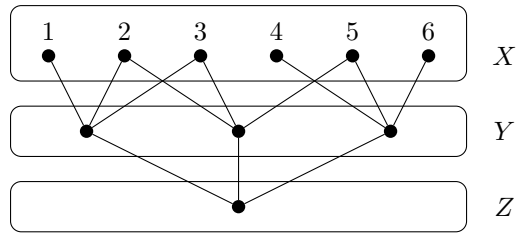

To prove the \NP-hardness of \ProbeCol{}, we reduce from the \NP-complete problem \ExactThreeCover{}~\cite{garey1979computers}.
To this end, let $(X, \mathcal{S})$ be an instance of \ExactThreeCover{} and $s = |\mathcal{S}|$.
We may assume that $|X| = 3k$ for a positive integer $k$ and $s \geq k$; otherwise, $(X, \mathcal{S})$ is a no-instance and we map it to some trivial no-instance of \ProbeCol{}.
Let $G$ be the graph defined as follows.
The vertex set of $G$ is the disjoint union of the sets $X$, $Y$, and $Z$, where $Y = \{y_S: S \in \mathcal{S}\}$ and $|Z| = s - k$.
The set $X$ induces a clique in $G$, while $Y$ and $Z$ are both independent in $G$.
The set $X$ is anticomplete to $Z$, while the set $Y$ is complete to $Z$.
Between $X$ and $Y$ there are exactly the edges $xy_S$ for $x \in X$ and $y_S \in Y$ with $x \in S$.
This completes the description of $G$; see Figure~\ref{fig:ETC}. 
Clearly, $G$ is constructable in polynomial time.

We claim that $(X, \mathcal{S})$ is a yes-instance of \ExactThreeCover{} if and only if the vertex set of $G$ is the union of $s$ pairwise disjoint cliques.
If $\mathcal{S'} \subseteq \mathcal{S}$ is such that each element of $X$ is contained in exactly one subset of $\mathcal{S'}$, then a covering of $G$ with $s$ pairwise disjoint cliques is given by $S \cup \{y_S\}$ for each $S \in \mathcal{S'}$ and the edges of a perfect matching between $\{y_S: S \in \mathcal{S} \setminus \mathcal{S'}\}$ and $Z$, which exists.
For the other direction, let $V_1, \dots, V_s$ be pairwise disjoint cliques of $G$ such that $V(G) = \bigcup_{i \in [s]} V_i$. 
Let $I = \{i \in [s]: V_i \cap Z \neq \emptyset\}$.
Since $Y$ is independent in $G$, each $V_i$ contains exactly one vertex of $Y$.
Similarly, as $Z$ is independent in $G$, we have $|I| = s-k$.
Since $X$ and $Z$ are anticomplete, we have $X \subseteq \bigcup_{i \in [s] \setminus I} V_i$.
Now, since every vertex $y_S \in Y$ has exactly 3 neighbours in $X$, the sets $V_i$ for $i \in [s] \setminus I$ have cardinality at most~$4$.
Since $|[s] \setminus I| = k$, we get that they have cardinality exactly~$4$.
Since the $V_i$ are pairwise disjoint,
\[
\mathcal{S'} = \left\{S \in \mathcal{S}: y_S \in \bigcup_{i \in [s] \setminus I} V_i\right\}
\]
witnesses that $(X, \mathcal{S})$ is a yes-instance of \ExactThreeCover.

At this point, consider the complement $\overline{G}$ of $G$.
Note that the complement is computable in polynomial time.
Observe that $X$ is independent in $\overline{G}$.
The graph $\overline{G}$ is probe $3P_1$-free, since $\overline{G} + {X \choose 2}$, the graph obtained from $\overline{G}$ by turning $X$ into a clique, is $3P_1$-free.
To see this, observe that $X \cup Z$ and $Y$ are cliques in $\overline{G} + {X \choose 2}$.
The fact that the vertex set of $G$ is the union of $s$ pairwise disjoint cliques if and only if $(\overline{G}, Y \cup Z, X, s)$ is a yes-instance completes the proof.
\end{proof}

\noindent 
We combine Proposition~\ref{prp:col-probe-p1p3} with the \NP-completeness part of the dichotomy of  Kr{\'{a}}l et al.~\cite{kral2001complexity}, the fact that $P_4$-free graphs have clique-width~$2$~\cite{courcelle2000upperbounds} and Proposition~\ref{prop:probe-cw-mimw} to obtain Theorem~\ref{thm:col-probe}:

\ThmColouringProbe*
\begin{proof}
We first recall the definition of cliquewidth and $k$-expressions. A \emph{$k$-expression} combines any number of the following operations on a labelled graph with labels in $[k]$:
\begin{itemize}
\item create a new graph with a single vertex with label $1$;
\item given $i,j \in [k]$, $i \not= j$, relabel all vertices with label $i$ to label $j$;
\item given $i,j \in [k]$, $i \not= j$, add all edges between vertices with label $i$ and label $j$;
\item take the disjoint union of two labelled graphs with labels in $[k]$.
\end{itemize}
The smallest integer~$k$ for which a graph~$G$ has a $k$-expression is called the  \emph{clique-width} of $G$~\cite{CourcelleER93}.

We now continue as follows. By the results of Kr{\'{a}}l et al.~\cite{kral2001complexity}, \Col{} on $H$-free graphs is \NP-complete unless $H$ is an induced subgraph of $P_4$ or $P_3+P_1$, and thus \ProbeCol{} on partitioned probe $H$-free graphs is \NP-complete unless $H$ is an induced subgraph of $P_4$ or $P_3+P_1$.

If $H$ is an induced subgraph of $P_4$, we obtain a polynomial-time algorithm for probe $H$-free graphs as well. Since $P_4$-free graphs are the graphs with clique-width at most~$2$~\cite{courcelle2000upperbounds}, probe $P_4$-free graphs have clique-width at most~$4$ by Proposition~\ref{prop:probe-cw-mimw}~(i). We can find a $15$-expression of probe $P_4$-free graphs in polynomial time~\cite{HlinenyO08} and solve \Col{} in polynomial time~\cite{EspelageGW01,KoblerR03}.

Note that the only graphs that are an induced subgraph of $P_3+P_1$, but not an induced subgraph of $P_4$, are $3P_1$ and $P_3+P_1$. Hence, Proposition~\ref{prp:col-probe-p1p3} shows that \ProbeCol{} is \NP-complete for those cases.
\end{proof}

\section{The Proof of the Polynomial Part of Theorem~\ref{thm:3col-probe-p5}}\label{sec:probe-p5}

In this section we prove the main result of our paper, namely that $3$-{\sc Colouring} is polynomial-time solvable for partitioned probe $P_5$-free graphs.

We first show the following independent result (Proposition~\ref{p-easy}) in more or less the same way as done in~\cite{WS01} for showing that $(C_3,P_5)$-free graphs are $3$-colourable. The main difference is that we must take into account the probes and non-probes. As such, the proof of this result serves as warm-up exercise illustrating some of the arguments we will use in a more involved way in the proof of our main result. Note that probe $(C_3,P_5)$-free graphs form a subclass of $C_3$-free probe $P_5$-free graphs. This containment is proper, as
\begin{itemize}
\item [(i)] probe $(C_3,P_5)$-free graphs have bounded mim-width, due to $(C_3,P_5)$-free graphs having bounded mim-width~\cite{brettel2022listkcolouring} and Proposition~\ref{prop:probe-cw-mimw}, and
\item [(ii)] chordal bipartite graphs, which are $C_3$-free probe $P_5$-free, have unbounded mim-width~\cite{BCM15}.
\end{itemize}

\begin{proposition}\label{p-easy}
All $C_3$-free probe $P_5$-free graphs, and thus all probe $(C_3,P_5)$-free graphs, are $3$-colourable.
\end{proposition}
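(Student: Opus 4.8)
The plan is to mirror the Woeginger--Sgall argument for $(C_3,P_5)$-free graphs~\cite{WS01}: extract a \emph{dominating} induced $C_5$ and then write down a $3$-colouring by hand, while taking care that every structural argument uses only edges and non-edges incident to that cycle. Throughout I would write $H=G+F$ for a $P_5$-free supergraph with $F\subseteq\binom{N}{2}$. Since $F$ only adds edges inside the independent set $N$, any induced $P_5$ of $H$ whose internal (non-endpoint) vertices lie in $P$ is already an induced $P_5$ of $G$; this is the device that lets me ``read off'' the $P_5$-freeness of $H$ inside $G$. I may assume $G$ is connected, colouring the components independently. If every component of $G[P]$ is bipartite, I colour $G[P]$ with colours $1,2$ and give every vertex of $N$ colour $3$; this is proper because $N$ is independent, so that case is done.

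Otherwise some component of $G[P]$ is non-bipartite. Now $G[P]=H[P]$ is $(C_3,P_5)$-free (triangle-free as a subgraph of $G$, and $P_5$-free as an induced subgraph of $H$), so its shortest odd cycle is induced; triangle-freeness forbids length $3$ and $P_5$-freeness forbids length $\ge 7$, hence $G[P]$ contains an induced $C_5$, say $C=c_1c_2c_3c_4c_5$ (indices mod $5$), with $C\subseteq P$. The first key step is to show that $C$ dominates $G$: if some $v$ had no neighbour in $C$, connectivity would yield a genuine $G$-edge $uv$ with $u$ adjacent to some $c_i$, and then one of $v,u,c_i,c_{i+1},c_{i+2}$ or $v,u,c_i,c_{i-1},c_{i-2}$ is an induced $P_5$ (the two escape chords $uc_{i+2}$ and $uc_{i+3}$ cannot both be present, as $c_{i+2}c_{i+3}$ is an edge and $G$ is triangle-free). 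All edges used here are $G$-edges and every required non-edge meets some $c_j\in P$, so the $P_5$ survives in $H$, a contradiction.

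The second key step pins down the attachment of each $v\notin C$. As $G$ is triangle-free, $N_G(v)\cap C$ is an independent set of $C_5$, hence of size $1$ or $2$, and in the latter case an antipodal pair. A vertex adjacent to exactly one $c_i$, say $c_2$, would produce the induced $P_5$ given by $c_4c_5c_1c_2v$ in $H$ (again anchored entirely at the probes of $C$), which is impossible; so every $v\notin C$ is adjacent to exactly an antipodal pair $\{c_{i-1},c_{i+1}\}$. Grouping vertices by this pair gives a partition $V\setminus C=M_1\cup\dots\cup M_5$ with $M_i=\{v:N_G(v)\cap C=\{c_{i-1},c_{i+1}\}\}$, and triangle-freeness (the independence of each $N_G(c_j)$) then forces every $M_i$ to be independent and $M_i$ to be anticomplete to $M_{i+2}$. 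Thus the only edges outside $C$ run between cyclically consecutive classes.

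With this structure I would assign colour $1$ to $\{c_1,c_3\}\cup M_1\cup M_3$, colour $2$ to $\{c_2,c_4\}\cup M_2\cup M_4$, and colour $3$ to $\{c_5\}\cup M_5$, and check that each class is independent: the two listed cycle vertices are non-adjacent on $C$, each $M_i$ is independent, the two $M$-indices in each class differ by $2$ (anticomplete), and each $M_i$ avoids $c_{i-1},c_{i+1}$ and so is non-adjacent to the cycle vertices placed with it. This yields a proper $3$-colouring of $G$. I expect the main obstacle to be precisely the transfer of these $P_5$-free structural arguments from $H$ to $G$: a priori $G$ need not be $P_5$-free and may even contain long induced cycles, so the standard dominating-$C_5$ theory cannot be quoted directly. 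The resolution, applied in both key steps, is that every forbidden $P_5$ is anchored on the cycle $C\subseteq P$, so all of its edges and non-edges are untouched by the unknown set $F$ and can be certified inside $G$ itself.
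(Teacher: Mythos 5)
Your proposal is correct and follows essentially the same route as the paper's proof: handle the bipartite-probe case directly, extract an induced $C_5$ from the non-bipartite component of $G[P]$, show it dominates and that every outside vertex attaches to exactly a distance-two pair on the cycle (certifying each forbidden $P_5$ inside $G$ because all its non-edges meet the probe cycle), and then write down the same explicit $3$-colouring of the five attachment classes. The only cosmetic difference is your relabelling of the classes $M_i$ versus the paper's $V_{i,i+2}$.
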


\begin{proof}
Let $G=(V,E)$ be a $C_3$-free probe $P_5$-free graph. Let $P$ be the set of probes and $N=V\setminus P$ be the set of non-probes, so $N$ is an independent set. By definition, there exists a set $F$ of edges with both end-vertices in $N$ such that $G+F$ is $P_5$-free. We may assume without loss of generality that $G$ is connected, as otherwise we consider every connected component of $G$ separately.

If $G[P]$ is bipartite, then we colour $G[P]$ with colours~$1$ and~$2$, and as $N$ is independent, we can use colour~$3$ for the vertices of~$N$.
Now suppose that $G[P]$ is not bipartite. This means that $G[P]$ has an odd cycle $C$. As $G$ is $C_3$-free and probe $P_5$-free, $G[P]$ must be $(C_3,P_5)$-free. Hence, $C$ has length~$5$. Let $V(C)=\{v_1,\ldots,v_5\}$ in that order. 
We will now use, with a little bit of extra care, the same arguments as in~\cite{WS01}.

We first claim that every vertex not on $C$ has a neighbour on $C$. Else, as $G$ is connected, there exists a vertex~$u\in V\setminus V(C)$ that is anti-complete to $C$ and that has a neighbour $v$ that is adjacent to at least one vertex on $C$, say $vv_1\in E$. If $uvv_1v_2v_3$ is an induced $P_5$ in $G$, then $uvv_1v_2v_3$ is also an induced $P_5$ in $G+F$. The reason is that $F$ contains no edge that is incident with a vertex from $\{v_1,v_2,v_3\}$, because $v_1,v_2,v_3$ all belong to $P$.
As $G$ is $C_3$-free and $G+F$ is $P_5$-free, this means that $v$ must be adjacent to $v_3$ in $G$. By applying the same arguments on the path $uvv_3v_4v_5$, we find that $v$ must be adjacent to $v_5$ in $G$ as well. However, now $v,v_1,v_5$ form a triangle in~$G$, contradicting the $C_3$-freeness of $G$.

We now claim that every vertex not on $C$ has exactly two neighbours $v_i$ and $v_{i+2}$ on $C$ for some $i\in \{1,\ldots, 5\}$, where we write $v_6:=v_1$ and $v_7:=v_2$. Let $v\in V\setminus V(C)$. By the above, $v$ has a neighbour on $C$, say $v$ is adjacent to $v_1$. Recall that $F$ does not contain any edges incident to vertices of $C$, as $V(C)\subseteq P$. Hence, if $vv_1v_2v_3v_4$ is an induced $P_5$ in $G$, then $vv_1v_2v_3v_4$ is also an induced $P_5$ in $G+F$. As $G$ is $C_3$-free and $G+F$ is $P_5$-free, this means that $v$ is either adjacent to $v_3$ (take $i=1$) or to $v_4$ (take $i=4$).
Due to the above, we can decompose $V$ as $$V=V(C)\cup V_{1,3} \cup V_{2,4} \cup V_{3,5} \cup V_{4,1} \cup V_{5,2},$$ where for $i\in \{1,\ldots,5\}$, the set $V_{i,i+2}$ consist of all vertices of $V\setminus C$ whose neighbours on $C$ are exactly $v_i$ and $v_{i+2}$.
We give $v_1,v_2,v_3,v_4,v_5$ colours $1,2,1,2,3$, respectively. Moreover, we colour all the vertices of $V_{1,3}$, $V_{2,4}$, $V_{3,5}$, $V_{4,1}$ and $V_{5,2}$ with colours $2,1,2,3,1$, respectively. As $G$ is $C_3$-free, the five sets $V_{i,i+1}$ are all independent. Therefore, the only potential conflicts could be between vertices from $V_{1,3}$ and $V_{3,5}$, which are all coloured~$2$, or between vertices from $V_{2,4}$ and $V_{5,2}$, which are all coloured~$1$. However, the former vertices are all incident to $v_3$, and thus form an independent set in $G$, and similarly, the latter vertices are all adjacent to $v_2$, and thus als form an independent  set in $G$. We conclude that we have indeed constructed a $3$-colouring of $G$, completing the proof.
\end{proof}

\noindent
We are now ready to show the main result of our paper, which we prove in the way outlined at the end of Section~\ref{sec:intro}.

\begin{theorem}\label{t-poly}
\kCol{$3$} is polynomial-time solvable for partitioned probe $P_5$-free graphs.
\end{theorem}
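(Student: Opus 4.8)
The plan is to follow the route sketched at the end of Section~\ref{sec:intro}: combine the colour-propagation of Algorithm~\ref{alg:prop} with a reduction to \textsc{$2$-List Colouring} via Lemma~\ref{lem:ext}. Let $(G,P,N)$ be the input, so that $N$ is independent and $G+F$ is $P_5$-free for some unknown $F\subseteq\binom{N}{2}$; in particular $G[P]$ is $P_5$-free. I may assume $G$ is connected, since each connected component $G'$ of $G$ is again partitioned probe $P_5$-free: restricting $F$ to $\binom{N\cap V(G')}{2}$ keeps $G'+F'$ an induced subgraph of $G+F$. If every connected component of $G[P]$ is bipartite, then I $2$-colour $G[P]$ with colours~$1$ and~$2$ and give every vertex of the independent set $N$ colour~$3$, so $G$ is $3$-colourable. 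Hence from now on $G[P]$ has a non-bipartite component.

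The first structural step is to prove that, since $G$ is connected, \emph{at most one} connected component of $G[P]$ is non-bipartite; this is where the existence of $F$ is exploited without knowing $F$. If $K_1,K_2$ were two distinct non-bipartite components of $G[P]$, then each contains a shortest odd cycle, which by $P_5$-freeness of $G[P]$ has length $3$ or $5$, and these cycles consist entirely of probes and hence are anticomplete to one another in $G+F$ (as $F$ adds no edge incident to a probe). A shortest path between them in the connected graph $G+F$, together with suitable cycle edges, then yields an induced $P_5$ in $G+F$, a contradiction. Checking that such a path always produces an induced $P_5$, including the cases where the endpoints have extra neighbours on the path, is a finite but fiddly verification. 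Let $K$ be the unique non-bipartite component and let $C\subseteq K$ be a shortest, hence induced, odd cycle; as above $|V(C)|\in\{3,5\}$.

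Next I would branch over all $O(1)$ proper $3$-colourings of $C$ (at most $3!$ if $C$ is a triangle and at most $30$ if $C=C_5$). For each such partial colouring I run Algorithm~\ref{alg:prop}; by Lemma~\ref{lem:prop} this either correctly certifies that the branch cannot be extended, or forces the colour of every vertex whose colour is already determined. After propagation, every uncoloured vertex with a coloured neighbour has an available list of size at most~$2$, so the only obstruction to invoking Lemma~\ref{lem:ext} is the set $R$ of \emph{remote} uncoloured vertices that have no coloured neighbour at all and thus still carry the full list $\{1,2,3\}$. Because $C$ need not be a dominating cycle --- the failure mode that also breaks the approaches of~\cite{RST02,WS01} for ordinary $P_5$-free graphs --- the set $R$ can be non-empty, and this is the heart of the proof.

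The crux, which I expect to be by far the hardest part, is a decomposition of $G$ relative to $C$ (cf.\ Figure~\ref{fig:partitionKandM}) that controls $R$ while remaining robust to not knowing $F$. Every structural argument must be phrased so that the forbidden induced $P_5$ uses enough probes for $F$ to be irrelevant, exactly as in the proof of Proposition~\ref{p-easy}. The aim is to show that the vertices still carrying a size-$3$ list are organised into pieces on which a colouring can be chosen essentially independently, and that each such piece, together with its (limited) attachment to the size-$\le 2$ part, can be modelled by Boolean variables so that the whole branch reduces to a single polynomial-size instance of an \emph{adaptation} of the $2$-SAT formula from the proof of Lemma~\ref{lem:ext}. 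Proving that the interaction between the remote part and the rest is sufficiently restricted, using only $P_5$-freeness of the unknown graph $G+F$ and the independence of $N$, is where essentially all of the difficulty is concentrated; the triangle case $|V(C)|=3$, in which $C$ is far from dominating, is the most delicate.
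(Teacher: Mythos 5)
Your outline matches the paper's strategy at a high level (reduce to one non-bipartite component of $G[P]$, branch on a short odd cycle $C$, propagate with Algorithm~\ref{alg:prop}, then reduce to \textsc{$2$-List Colouring}), but there are two genuine gaps. First, your structural claim that a connected partitioned probe $P_5$-free graph has at most one non-bipartite component of $G[P]$ is false as an \emph{unconditional} statement: take two disjoint triangles of probes and a single non-probe complete to both; this graph is already $P_5$-free (every induced path has at most three vertices), yet $G[P]$ has two non-bipartite components. Your proposed "shortest path plus suitable cycle edges" argument breaks precisely in the case you wave away, where the connecting vertex has extra neighbours on the cycles. The correct statement is conditional --- either all but one component is bipartite \emph{or} $G$ is not $3$-colourable --- and proving it requires first establishing that every vertex of $N$ that is neither complete nor anticomplete to some component $K_i$ is complete or anticomplete to every other $K_j$ (the paper's Claim~\ref{claim:anti-complete}), and then showing that a non-bipartite $K_2$ forces some vertex of $N$ to be complete to it (Claim~\ref{claim:complete-to-Ki}), which kills $3$-colourability.

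Second, and more seriously, the part you yourself identify as "the crux" and "by far the hardest part" is not carried out: you state as an \emph{aim} that the vertices with no coloured neighbour after propagation decompose into controllable pieces, but you give no decomposition and no argument. This is where essentially all the work of the paper lies. Concretely, the paper partitions $V(K)$ into coloured, partially constrained, and remaining vertices ($K_c$, $K_u^i$, $K_r$), splits $N$ into $M=N_G(I)$ and $L=N\setminus M$ with further refinements ($M_c, M_u^i, M_r, L_c, L_u^i, L_r$), and proves roughly six structural claims: that $K_r\cup L_r$ is dominated by at most two vertices of $K$ (via the connected dominating set of a $P_5$-free graph being a clique or a $C_5$); that $M_r$ has no neighbour in $K$ and can be coloured last; that $M_u^i$ is complete to $K_c^i$; and that $M_u$ is complete to the set $J$ of $I$-vertices with no $M_c$-neighbour whenever two of the $M_u^i$ are nonempty. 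Only then can one branch on $O(1)$ additional vertices and apply an \emph{adapted} $2$-SAT formula with extra equality clauses forcing $N_G(K')$ to be monochromatic for each component $K'$ of $G[J]$. None of this is present or even sketched concretely in your proposal, so as written it is a plan rather than a proof.
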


\begin{proof}
Let $(G,P,N)$ be a partitioned probe $P_5$-free graph.
We may assume that $G$ is connected; otherwise, we execute the given algorithm for every connected component of $G$.
Let $F \subseteq {N \choose 2}$ be such that $G + F$ is $P_5$-free.
We define $F$ only for verifying correctness; the polynomial-time algorithm does not use $F$.
If $G$ is $P_5$-free, then it is possible in polynomial time to determine whether $G$ is 3-colourable~\cite{HKLSS10}.
Therefore, we may assume that $G$ is not $P_5$-free and, in particular, $|N| \geq 2$ and $|F| \geq 1$.
We may also assume that $G$ does not contain a clique of order at least~$4$;
otherwise, $G$ is not $3$-colourable.
Let $K_1, \dots, K_t$ be the connected components of $G[P]$  that contain at least one edge.
We may assume at least one such connected component exists; else $G$ is bipartite with partite sets $P$ and $N$, and thus clearly 3-colourable in polynomial time.

\subparagraph*{Getting initial structure}
We begin by proving two claims that describe the structure of edges between $K_1, \dots, K_t$ and $N$.

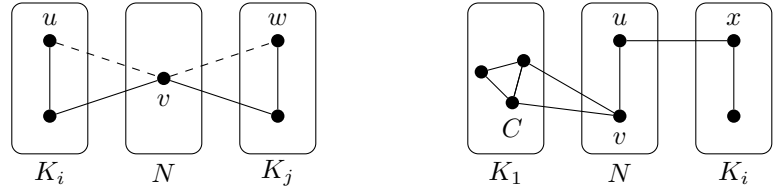
\begin{figure}
\centering
\begin{tikzpicture}
\begin{scope}[shift={(-3,0)}]
\draw[rounded corners] (-2,-1) rectangle (-1,1);
\node at (-1.5,-1.25) {$K_i$};
\draw[rounded corners] (-0.5,-1) rectangle (0.5,1);
\node at (0,-1.25) {$N$};
\draw[rounded corners] (1,-1) rectangle (2,1);
\node at (1.5,-1.25) {$K_j$};

\node[dot,label=above:$u$] (u) at (-1.5,0.5) {};
\node[dot] (nu) at (-1.5,-0.5) {};
\node[dot,label=below:$v$] (v) at (0,0) {};
\node[dot,label=above:$w$] (w) at (1.5,0.5) {};
\node[dot] (nw) at (1.5,-0.5) {};

\draw (u) to (nu) to (v) to (nw) to (w);
\draw[dashed] (u) to (v) to (w);
\end{scope}
\begin{scope}[shift={(3,0)}]
\draw[rounded corners] (-2,-1) rectangle (-1,1);
\node at (-1.5,-1.25) {$K_1$};
\draw[rounded corners] (-0.5,-1) rectangle (0.5,1);
\node at (0,-1.25) {$N$};
\draw[rounded corners] (1,-1) rectangle (2,1);
\node at (1.5,-1.25) {$K_i$};

\node[dot,label=above:$u$] (u) at (0,0.5) {};
\node[dot,label=below:$v$] (v) at (0,-0.5) {};
\foreach \i in {1, ..., 3} {
	\ifthenelse{\i=2}{
		\node[dot,shift={(-1.5,0)},label=below:$C$] (t\i) at (\i*120+45:0.333) {};
	}{
		\node[dot,shift={(-1.5,0)}] (t\i) at (\i*120+45:0.333) {};
	}
};
\node[dot,label=above:$x$] (x) at (1.5,0.5) {};
\node[dot] (y) at (1.5,-0.5) {};
\draw (t1) to (t2) to (t3) to (t1);
\draw (v) to (t3) to (t2) to (v) to (u) to (x) to (y);
\end{scope}
\end{tikzpicture}
\caption{Left: Proof of Claim~\ref{claim:anti-complete}. The dashed lines indicate non-existing edges. Right: Proof of Claim~\ref{claim:complete-to-Ki}. Note that $uv \in F$.}\label{fig:claim1and2}
\end{figure}

\begin{claim}\label{claim:anti-complete}
Every vertex of $N$ that is neither complete nor anticomplete to $K_i$ for some $i \in [t]$ is complete or anticomplete to $K_j$ for every $j \in [t]$ with $j \neq i$.
\end{claim}
\begin{proof}
Let $v \in N$ be neither complete nor anticomplete to $K_i$.
Suppose that $v$ has a neighbour in $K_j$, where $j \neq i$.
It suffices to prove that $v$ is complete to $K_j$.
Assume, for a contradiction, that $w \in V(K_j)$ is not adjacent to $v$.
By assumption, there exists $u \in V(K_i)$ that is not adjacent to $v$.
A shortest $u$-$v$-path with internal vertices in $K_i$ followed by a shortest $v$-$w$-path with internal vertices in $K_j$ is induced in $G+F$ and has length at least 4; see Figure~\ref{fig:claim1and2}.
Since such a path  exists, there is an induced $P_5$ in $G+F$, a contradiction.
\end{proof}

If $K_1, \dots, K_t$ are all bipartite, then $G$ is clearly 3-colourable, since $N$ is independent in $G$.
Therefore, we may assume that $t \geq1$ and $K_1$ is not bipartite.
This implies that $K_1$ contains an induced odd cycle and, since $K_1$ is $P_5$-free because of $V(K_1) \subseteq P$, such a cycle has length~$3$ or length~$5$.
We now pick an induced odd cycle $C$ in $K_1$ as follows. If $K_1$ contains an induced $C_5$, then let $C$ be any such $C_5$. If $K_1$ does not contain an induced $C_5$, but contains an induced $C_3$ that dominates $K_1$, then let $C$ be any such $C_3$. Otherwise, we pick $C$ to be an arbitrary $C_3$.
Note that computing $C$ is possible in polynomial time.

If a single vertex of $V(G) \setminus C$ dominates $C$, then $G$ is clearly not 3-colourable.
Hence, we may assume from here that this is not the case.
This fact (that we often use implicitly) has important implications. In particular, no vertex of $N$ dominates $K_1$. But also:

\begin{claim}\label{claim:complete-to-Ki}
Let $u \in N$ be a vertex with no neighbour in $K_1$.
If $u$ has a neighbour in $K_i$ with $i \geq 2$, then a vertex of $N$ with a neighbour in $K_1$ is complete to $K_i$.
\end{claim}
\begin{proof}
Consider a shortest $u$-$C$-path $Q$ in $G+F$. As $u$ has no neighbour in $K_1$, $Q$ has length at least~$2$. Let $w$ be the vertex of $C$ where $Q$ ends and let $v$ be the vertex on $Q$ preceding $w$. Using the observation preceding the claim, $v$ is not complete to $C$. We may thus assume that $Q$ was chosen such that there exists a vertex $z \in N_C(w) \setminus N_{G+F}(v)$. If $v \in K_1$, then as $u$ does not neighbour $K_1$, the path $Qz$ has length at least~$4$, a contradiction to the fact that $G+F$ is $P_5$-free. Hence, $v \in N \setminus\{u\}$ and $v$ is neither complete nor anticomplete to $K_1$. If $v$ is not a neighbour of $u$ in $G+F$, then $Qz$ is an induced path in $G+F$ of length at least~$4$, a contradiction. Let $x$ be a neighbour of $u$ in $K_i$. If $x$ is not a neighbour of $v$ in $G+F$, then the path $xuvwz$ is an induced $P_5$ in $G+F$, a contradiction; see Figure~\ref{fig:claim1and2} right. Hence, $v$ has a neighbour in $K_i$, and the claim follows from Claim~\ref{claim:anti-complete}.
\end{proof}

\begin{claim} \label{claim:isbipartite}
The connected components $K_2, \dots, K_t$ are all bipartite or $G$ is not 3-colourable.
\end{claim}
\begin{proof}
Assume (without loss of generality) $K_2$ is not bipartite and $G$ is $3$-colourable. From our earlier observation, if some vertex is complete to $K_1$ or to $K_2$, then $G$ is not $3$-colourable, a contradiction. As $G$ is connected, $K_2$ has a neighbour $u \in N$. Hence, $u$ is neither complete or anticomplete to $K_2$. As $u$ cannot be complete to $K_1$, by Claim~\ref{claim:anti-complete}, $u$ is anticomplete to $K_1$. Then, by Claim~\ref{claim:complete-to-Ki}, there is a vertex in $N$ that is complete to $K_2$, a contradiction.
\end{proof}
We can check in linear time whether $K_2, \dots, K_t$ are all indeed bipartite.

\subparagraph*{Colouring $C$}
Let $K=K_1$ for brevity and $I = P \setminus V(K)$.
Note that $G[I]$ consists only of isolated vertices and bipartite connected components.
We branch on all partial 3-colourings $\psi$ that only colour every vertex of $C$. 
There are constantly many branches, as there are only constantly many such partial 3-colourings. 
We propagate the colours through $K$ by executing Algorithm~\ref{alg:prop} on $(K, \psi)$.
If an error occurred, then there is no 3-colouring of $G$ that is an extension of $\psi$ by Lemma~\ref{lem:prop}~(ii), and we backtrack.
So we may assume that no error occurred, and for simplicity we denote the returned extension of $\psi$ by $\psi$ again.

We explicitly only propagated the colours through $K$. We now partition of $V(K)$. Let

\begin{itemize}
\item $K_c^i$ be the set of vertices of $K$ with colour $i$ for $i \in [3]$,
\item $K_c = \bigcup_{i \in [3]} K_c^i$,
\item $K_u^i$ be the set of uncoloured vertices of $K$ with a neighbour of colour $i$ for $i \in [3]$,
\item $K_u = \bigcup_{i \in [3]} K_u^i$, and
\item $K_r = V(K) \setminus (K_c \cup K_u)$ consist of the remaining vertices of $K$.
\end{itemize}
Note that $G[K_c]$ is connected, because $C$ is connected, and we assign colours to uncoloured vertices only with the Propagation Rule in Algorithm~\ref{alg:prop}.
Also note that the vertices of $K_u^i$ have only neighbours of colour $i \in [3]$ since they are uncoloured.

Our ultimate goal is to apply Lemma~\ref{lem:ext}.
So far we are not in a position to apply it, since there may be vertices (for example in $K_r$) that do not have a coloured neighbour. In the remaining proof, we distinguish two cases, depending on the length of $C$.

\subparagraph*{Case 1: $C$ has length 5}
We show that all vertices already have a coloured neighbour.

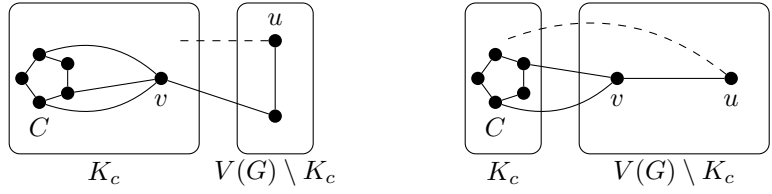
\begin{figure}
\centering
\begin{tikzpicture}
\begin{scope}[shift={(-3,0)}]
\draw[rounded corners] (-2,-1) rectangle (0.5,1);
\node at (-0.75,-1.25) {$K_c$};
\draw[rounded corners] (1,-1) rectangle (2,1);
\node at (1.5,-1.25) {$V(G) \setminus K_c$};

\foreach \i in {1, ..., 5} {
	\ifthenelse{\i=1}{
		\node[dot,shift={(-1.5,0)},label=below:$C$] (t\i) at (\i*72+180:0.333) {};
	}{
		\node[dot,shift={(-1.5,0)}] (t\i) at (\i*72+180:0.333) {};
	}
};
\draw (t1) to (t2) to (t3) to (t4) to (t5) to (t1);

\node[dot,label=below:$v$] (v) at (0,0) {};
\draw[bend right] (t1) to (v);
\draw (t2) to (v);
\draw[bend left] (t4) to (v);
\node[dot,label=above:$u$] (u) at (1.5,0.5) {};
\node[dot] (nu) at (1.5,-0.5) {};
\draw (u) to (nu) to (v);
\draw[dashed] (u) to (0.25,0.5);
\end{scope}
\begin{scope}[shift={(3,0)}]
\draw[rounded corners] (-2,-1) rectangle (-1,1);
\node at (-1.5,-1.25) {$K_c$};
\draw[rounded corners] (-0.5,-1) rectangle (2,1);
\node at (0.75,-1.25) {$V(G) \setminus K_c$};

\foreach \i in {1, ..., 5} {
	\ifthenelse{\i=1}{
		\node[dot,shift={(-1.5,0)},label=below:$C$] (t\i) at (\i*72+180:0.333) {};
	}{
		\node[dot,shift={(-1.5,0)}] (t\i) at (\i*72+180:0.333) {};
	}
};
\draw (t1) to (t2) to (t3) to (t4) to (t5) to (t1);

\node[dot,label=below:$v$] (v) at (0,0) {};
\draw[bend right] (t1) to (v);
\draw (t3) to (v);
\node[dot,label=below:$u$] (u) at (1.5,0) {};
\draw (u) to (v);
\draw[dashed,bend right] (u) to (-1.5,0.5);
\end{scope}
\end{tikzpicture}
\caption{
Proof of Claim~\ref{claim:coloured-nbor}.
Dashed lines indicate non-existing edges.
}\label{fig:coloured-nbor}
\end{figure}

\begin{claim}\label{claim:coloured-nbor}
Every vertex of $V(G) \setminus K_c$ has a neighbour in $K_c$.
\end{claim}
\begin{proof}
Assume, for a contradiction, that $u \in V(G) \setminus K_c$ has no neighbour in $K_c$.
Consider a shortest $u$-$C$-path $Q$ in $G+F$.
Let $v$ be the vertex of $Q$ that has a neighbour in $C$.
Note that $v$ is not complete to $C$; otherwise, we would have concluded that $G$ is not 3-colourable.
If $v$ is in $K_c$ itself, then $Q$ has length at least 3, and there would be an induced $P_5$ in $G+F$ with vertices in $V(Q) \cup V(C)$; see Figure~\ref{fig:coloured-nbor} left.
Hence, $v$ is not in $K_c$. Then $v$ has at most two neighbours in $C$, and $Q$ has length at least 2, and there would be an induced $P_5$ in $G+F$ with vertices in $V(Q) \cup V(C)$, a contradiction; see Figure~\ref{fig:coloured-nbor} right.
\end{proof}

Claim~\ref{claim:coloured-nbor} implies that Lemma~\ref{lem:ext} is applicable in this case.
Therefore, deciding if there is a 3-colouring of $G$ that is an extension of $\psi$ is possible in polynomial time.
If there is no such 3-colouring of $G$, then we backtrack.

\subparagraph*{Case 2: $C$ has length 3}
First, note that for every vertex $v \in K_c$, we have that $v$ has two neighbours with two distinct colours in $[3] \setminus \{\psi(v)\}$, since $C$ is a clique and we assign colours to uncoloured vertices only through the Propagation Rule in Algorithm~\ref{alg:prop}.
We now give a more precise partition of $N$; see Figure~\ref{fig:partitionKandM}.
Let $M = N_G(I)$ and $L = N \setminus M$. Let
\begin{itemize}
\item $M_c$ and $L_c$ be the set of vertices of $M$ and $L$, respectively, that have two neighbours in $K_c$ with two distinct colours,
\item $M_u^i$ and $L_u^i$ be the set of vertices of $M \setminus M_c$ and $L \setminus L_c$, respectively, with a neighbour in $K_c^i$ for $i \in [3]$,
\item $M_u = \bigcup_{i \in [3]} M_u^i$, $L_u = \bigcup_{i \in [3]} L_u^i$,
\item $M_r = M \setminus (M_c \cup M_u)$, and $L_r = L \setminus (L_c \cup L_u)$.
\end{itemize}
Let $J$ be the set of vertices of $I$ with no neighbour in $M_c$.
Note that no vertex of $K_r$, $L_r$, $M_r$, and $J$ has a coloured neighbour. We now show how in the end we can apply Lemma~\ref{lem:ext}.

\begin{figure}
\centering
\begin{tikzpicture}
\draw[rounded corners] (-6.5,2.75) rectangle (-5.5,-3.5);
\node at (-6,-3.75) {$L$};
\draw[rounded corners] (-5.25,2.75) rectangle (-0.75,-3.5);
\node at (-3,-3.75) {$K$};
\draw[rounded corners] (-5,0.5) rectangle (-4,2.5);
\node at (-4.5,0.25) {$K_c$};
\draw[rounded corners] (-3.5,0) rectangle (-2.5,1);
\node at (-3,-0.25) {$K_u^1$};
\draw[rounded corners] (-3.5,-1.5) rectangle (-2.5,-0.5);
\node at (-3,-1.75) {$K_u^2$};
\draw[rounded corners] (-3.5,-3) rectangle (-2.5,-2);
\node at (-3,-3.25) {$K_u^3$};
\draw[rounded corners] (-2,-2) rectangle (-1,0);
\node at (-1.5,-2.25) {$K_r$};
\draw[rounded corners] (-0.5,-3.5) rectangle (0.5,2.75);
\node at (0,-3.75) {$M$};
\draw[rounded corners] (1,-3.5) rectangle (2,2.75);
\node at (1.5,-3.75) {$I$};

\foreach \i in {1, ..., 3} {
	\ifthenelse{\i=3}{
		\node[dot,shift={(-4.5,1.5)},label=above:$C$] (t\i) at (\i*120+45:0.333) {};
	}{
		\node[dot,shift={(-4.5,1.5)}] (t\i) at (\i*120+45:0.333) {};
	}
};
\draw (t1) to (t2) to (t3) to (t1);

\node[dot] (ku1) at (-3,0.5) {};
\draw (t3) to (ku1);

\node[dot] (ku2) at (-3,-1) {};
\draw (t2) to (ku2);

\node[dot] (ku3) at (-3,-2.5) {};
\draw (t1) to (ku3);

\node[dot] (kr) at (-1.5,-1) {};

\node[dot,label=below:{$M_c$}] (mc) at (0,2.25) {};
\draw (t3) to (mc) to (t2);
\draw (-0.5,1.25) to (0.5,1.25);
\node[dot,label=below:{$M_u^1$}] (mu1) at (0,0.75) {};
\draw (mu1) to (t3);
\draw (-0.5,-0.25) to (0.5,-0.25);
\node at (0,-1.125) {$\dots$};
\draw (-0.5,-2) to (0.5,-2);
\node[dot,label=below:{$M_r$}] (mr) at (0,-2.5) {};
\draw[dashed] (-0.25,-2.75) to (-1,-2.75);

\node[dot,label=below:{$L_c$}] (lc) at (-6,2.25) {};
\draw (-6.5,1.25) to (-5.5,1.25);
\draw (t3) to (lc) to (t1);
\node[dot,label=below:{$L_u^1$}] (lu1) at (-6,0.75) {};
\draw (lu1) to (t3);
\draw (-6.5,-0.25) to (-5.5,-0.25);
\node at (-6,-1.125) {$\dots$};
\draw (-6.5,-2) to (-5.5,-2);
\node[dot,label=below:{$L_r$}] (lr) at (-6,-2.5) {};
\draw (ku2) to (lr) to (ku3);
\draw[dashed] (-5.75,-2.25) to (-4.75,0.75);

\node[dot] (i1) at (1.5,1.75) {};
\node[dot] (i2) at (1.5,1.25) {};
\draw (i1) to (i2) to (mc) to (i1);
\node[dot] (i3) at (1.5,0.5) {};
\draw (mr) to (i3) to (mu1);
\draw[dashed,rounded corners] (1.5,2.5) to (1.5,3) to (-6,3) to (-6,2.5);
\end{tikzpicture}
\caption{
An illustration of the partition of $K$, $L$, and $M$.
Note that $P = V(K) \cup I$ and $N = M \cup L$. 
Dashed lines indicate some of the non-existing edges.
}\label{fig:partitionKandM}
\end{figure}
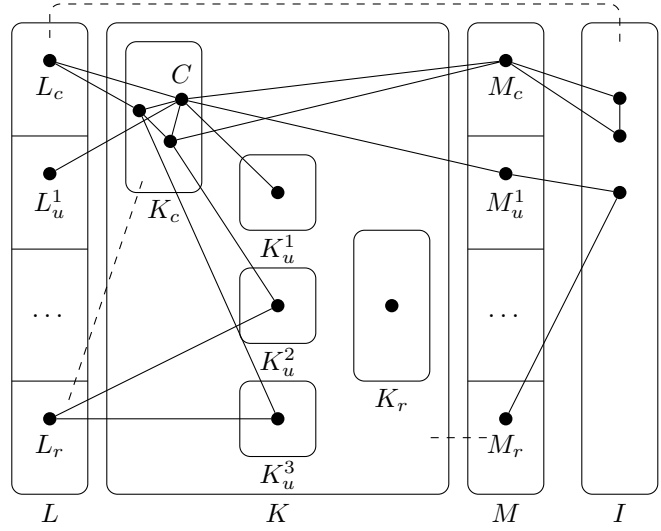

\subparagraph*{Handling $K_r$ and $L_r$}
Since $L \subseteq N$ is independent in $G$ and $G$ is connected, every vertex of $L$ has a neighbour in $K$.
If, in $G$, a vertex $v \in L_r$ has only neighbours in $K_u^i$ for one $i \in [3]$, then a 3-colouring of $G-v$ that extends $\psi$ can be extended to a 3-colouring of $G$ by assigning colour $i$ to $v$.
We remove any such $v$ from $G$ and continue.
Now, every vertex of $L_r$ has neighbours in $K_u^i$ for at least two distinct $i \in [3]$, or has a neighbour in $K_r$.
We prove two claims, one for each of the two described types of vertices in $L_r$.

\begin{figure}
\centering
\begin{tikzpicture}
\begin{scope}[shift={(-3,0)}]
\draw[rounded corners] (-2,-1) rectangle (-1,1);
\node at (-1.5,-1.25) {$K_c$};
\draw[rounded corners] (-0.5,-1) rectangle (0.5,1);
\node at (0,-1.25) {$K_u$};
\draw[rounded corners] (1,-1) rectangle (2,1);
\node at (1.5,-1.25) {$L_r$};

\node[dot,label=above:$u$] (u) at (1.5,0.5) {};
\node[dot,label=below:$v$] (v) at (1.5,-0.5) {};
\node[dot,label=above:$x$] (x) at (0,0.5) {};
\node[dot,label=below:$w$] (w) at (0,-0.5) {};
\draw (0.5,0) to (-0.5,0);
\foreach \i in {1, ..., 3} {
	\ifthenelse{\i=2}{
		\node[dot,shift={(-1.5,0)},label=below:$C$] (t\i) at (\i*120+45:0.333) {};
	}{
		\node[dot,shift={(-1.5,0)}] (t\i) at (\i*120+45:0.333) {};
	}
};
\draw (t1) to (t2) to (t3) to (t1);
\draw (t2) to (w) to (v);
\draw (t3) to (x) to (u);
\draw[dashed] (x) to (v);
\draw (x) to (w);
\end{scope}
\begin{scope}[shift={(3,0)}]
\draw[rounded corners] (-2,-1) rectangle (-1,1);
\node at (-1.5,-1.25) {$C'$};
\draw[rounded corners] (-0.5,-1) rectangle (0.5,1);
\node at (0,-1.25) {$K \setminus C'$};
\draw[rounded corners] (1,-1) rectangle (2,1);
\node at (1.5,-1.25) {$L_r'$};

\node[dot,label=below:$x$] (x) at (0,-0.5) {};
\foreach \i in {1, ..., 3} {
	\ifthenelse{\i=2}{
		\node[dot,shift={(-1.5,0)}] (t\i) at (\i*120+45:0.333) {};
	}{
		\ifthenelse{\i=3}{
			\node[dot,shift={(-1.5,0)},label=above:$u$] (t\i) at (\i*120+45:0.333) {};
		}{
			\node[dot,shift={(-1.5,0)},label=below:$z$] (t\i) at (\i*120+45:0.333) {};
		}
	}
};
\draw (t1) to (t2) to (t3) to (t1);
\node[dot,label=above:$v$] (v) at (0,0.5) {};
\node[dot,label=below:$y$] (y) at (1.5,0) {};
\draw (t3) to (v) to (x) to (y);
\draw[dashed] (x) to (-1.2,-0.2);
\end{scope}
\end{tikzpicture}
\caption{
Left: Proof of Claim~\ref{claim:Lr-Type1}. 
Right: Proof of Claim~\ref{claim:Lr-Type2}.
Dashed lines indicate non-existing edges. 
}\label{fig:claims-L}
\end{figure}
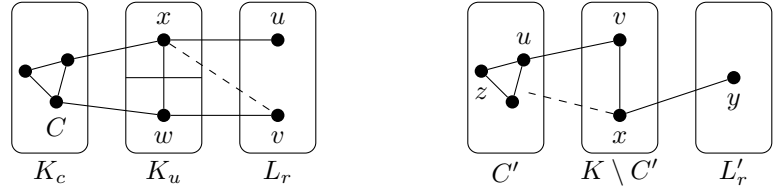

\begin{claim}\label{claim:Lr-Type1}
For $i,j \in [3]$ with $i\neq j$,
if $u \in L_r$ has a neighbour in $K_u^i$, and $v \in L_r$ has a neighbour in $K_u^j$,
then $u$ and $v$ have the same neighbours in $K_u^i \cup K_u^j$.
\end{claim}
\begin{proof}
Assume, for a contradiction, that $x \in K_u^i$ is a neighbour of $u$, but not a neighbour of $v$.
Let $w \in K_u^j$ be a neighbour of $v$.
Consider a shortest $x$-$w$-path $Q$ in $G$ with internal vertices in $K_c$.
As $Qv$ is not an induced $P_5$ in $G+F$, we must have $xw \in E(G)$.
Let $y$ be the neighbour of $x$ in $Q$, and let $z$ be a neighbour of $y$ that is adjacent to neither $x$ nor $w$.
Note that $z$ exists since every vertex of $K_c$ has two neighbours of two distinct colours.
Now, $vwxyz$ is an induced $P_5$ in $G+F$, a contradiction; see Figure~\ref{fig:claims-L} left.
\end{proof}

Let $L_r'$ be the set of vertices of $L_r$ with a neighbour in $K_r$.

\begin{claim}\label{claim:Lr-Type2}
A single vertex of $K$ dominates the vertices of $K_r \cup L_r'$.
\end{claim}
\begin{proof}
If $K_r = \emptyset$, then $L_r' = \emptyset$ and the statement is trivial. Hence, $K_r \not= \emptyset$. As $K$ is a connected $P_5$-free graph, $K$ contains a connected dominating set $D$ that induces a $P_3$-free graph or a $C_5$~\cite{camby2016newchar}. As we are in Case~2, $D$ cannot be a $C_5$. Hence, $D$ is a clique.

If $|D| \geq 4$, then $G$ contains a clique of order at least~$4$, which we already excluded. If $|D| = 3$, then $K$ contains a $C_3$ that dominates $K$. By the choice of $C$ and the fact that we are in Case~2, $C$ dominates $K$. Hence, our application of the Propagation Rule ensures that $K_r = \emptyset$, a contradiction. If $|D|=1$ and the vertex of $D$ is in $C$, then we arrive at a contradiction as before. If $|D|=1$ and the vertex of $D$ is not in $C$, then this vertex and $C$ form a clique of order at least~$4$, which we already excluded. It remains that $|D| = 2$. In other words, $K$ contains a dominating edge $uv$.

We must have that $N_K(u)$ and $N_K(v)$ are disjoint; otherwise, there would be a dominating triangle in $K$, which we can exclude as before. Without loss of generality, let $N_G(u)$ contain at least two vertices of $C$. This implies $u \in K_c$. As there is no edge between $K_c$ and $K_r$ by definition of $K_r$, $v$ dominates $K_r$.

It remains to show that $v$ is complete to $L_r'$. Suppose $y \in L_r' \setminus N_G(v)$ exists. Let $x \in K_r$ be a neighbour of $y$. As $u$ neighbours two vertices of $C$ and $u \in K_c$,  vertex $u$ is in a cycle $C'$ of length~$3$, 
which is contained in $K_c$ (possibly $C = C'$).
Let $z \in V(C') \setminus \{u\}$. As $N_K(u)$ and $N_K(v)$ are disjoint, $v$ is not adjacent to $z$. Also, $z$ is not adjacent to $x$ as $x \in K_r$ and $z \in K_c$, and $y$ is not adjacent to $u$, as $y \in L_r$ and $u \in K_c$. As $zuvxy$ is not an induced $P_5$ in $G+F$, we obtain $vy \in E(G)$, a contradiction; see Figure~\ref{fig:claims-L} right. Hence, $v$ dominates $L_r'$.
\end{proof}

Claim~\ref{claim:Lr-Type1} and Claim~\ref{claim:Lr-Type2} together imply that:

\begin{claim}\label{claim:KrLr}
$K_r \cup L_r$ is dominated by a set $D$ of at most two vertices of $K$.
\end{claim}

\subparagraph*{Handling $M_r$ and $J$}
We now describe the structure of the edges between $K$ and $M = N_G(I)$.

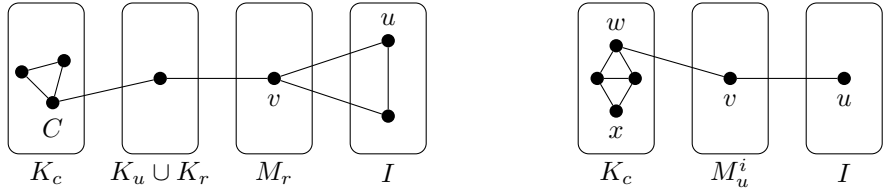
\begin{figure}
\centering
\begin{tikzpicture}
\begin{scope}[shift={(-3,0)}]
\draw[rounded corners] (-3.5,-1) rectangle (-2.5,1);
\node at (-3,-1.25) {$K_c$};
\draw[rounded corners] (-2,-1) rectangle (-1,1);
\node at (-1.5,-1.25) {$K_u \cup K_r$};
\draw[rounded corners] (-0.5,-1) rectangle (0.5,1);
\node at (0,-1.25) {$M_r$};
\draw[rounded corners] (1,-1) rectangle (2,1);
\node at (1.5,-1.25) {$I$};

\foreach \i in {1, ..., 3} {
	\ifthenelse{\i=2}{
		\node[dot,shift={(-3,0)},label=below:$C$] (t\i) at (\i*120+45:0.333) {};
	}{
		\node[dot,shift={(-3,0)}] (t\i) at (\i*120+45:0.333) {};
	}
};
\draw (t1) to (t2) to (t3) to (t1);

\node[dot] (nv) at (-1.5,0) {};
\node[dot,label=below:$v$] (v) at (0,0) {};
\node[dot,label=above:$u$] (u) at (1.5,0.5) {};
\node[dot] (nu) at (1.5,-0.5) {};
\draw (v) to (u) to (nu) to (v) to (nv) to (t2);
\end{scope}
\begin{scope}[shift={(3,0)}]
\draw[rounded corners] (-2,-1) rectangle (-1,1);
\node at (-1.5,-1.25) {$K_c$};
\draw[rounded corners] (-0.5,-1) rectangle (0.5,1);
\node at (0,-1.25) {$M_u^i$};
\draw[rounded corners] (1,-1) rectangle (2,1);
\node at (1.5,-1.25) {$I$};

\node[dot,label=below:$v$] (v) at (0,0) {};
\node[dot,label=below:$u$] (u) at (1.5,0) {};
\node[dot,label=above:$w$] (to) at (-1.5,0.433) {};
\node[dot] (tl) at (-1.75,0) {}; 
\node[dot] (tr) at (-1.25,0) {};
\node[dot,label=below:$x$] (tu) at (-1.5,-0.433) {};
\draw (tr) to (tl) to (to) to (tr) to (tu) to (tl);
\draw (to) to (v) to (u);
\end{scope}
\end{tikzpicture}
\caption{Left: Proof of Claim~\ref{claim:M-complete-Kci}~(i). Right: Proof of Claim~\ref{claim:M-complete-Kci}~(ii).}\label{fig:M-complete-Kci}
\end{figure}

\begin{claim}\label{claim:M-complete-Kci}
(i) Every vertex of $M_r$ has no neighbour in $K$, and (ii)
For every $i \in [3]$, $M_u^i$ is complete to $K_c^i$.
\end{claim}
\begin{proof}
We first prove (i).
Suppose, for the sake of contradiction, that there exists a vertex $v \in M_r$ that has a neighbour in $K$. Since $v \in M_r \subseteq M$, $v$ has a neighbour $u \in I$. By definition of $M_r$, $v$ has no neighbour in $K_c$. Let $Q$ be a shortest $v$-$C$-path in $G$ with internal vertices in $K$.
The path $Q$ must contain a vertex of $K_u^i$ for some $i \in [3]$ by assumption and therefore has length at least 2. Then there is an induced $P_5$ in $G+F$ with vertices in $\{u\} \cup V(Q) \cup V(C)$, a contradiction; see Figure~\ref{fig:M-complete-Kci} left.

We continue with (ii).
For some $i \in [3]$, let $v \in M_u^i$ such that $v$ is not complete to $K_c^i$. Since $v \in M$, $v$ has a neighbour $u \in I$. Since $v \in M_u^i$, $v$ has a neighbour in $K_c^i$. Let $x \in K_c^i$ be a non-neighbour of $v$.
Let $w \in K_c^i$ be a neighbour of $v$ that is closest to $x$ in $G[K_c]$.
Let $Q$ be a shortest $w$-$x$-path in $G[K_c]$, which exists since $G[K_c]$ is connected.
As $w,x \in K_c^i$, they are not adjacent. Thus, $Q$ has length at least 2, and $uvQ$ contains an induced $P_5$ in $G+F$, a contradiction; see Figure~\ref{fig:M-complete-Kci} right.
Hence, for every $i \in [3]$, $M_u^i$ is complete to $K_c^i$.
\end{proof}

We continue with two claims describing the structure of the edges between $I$ and $M_c \cup M_u$.

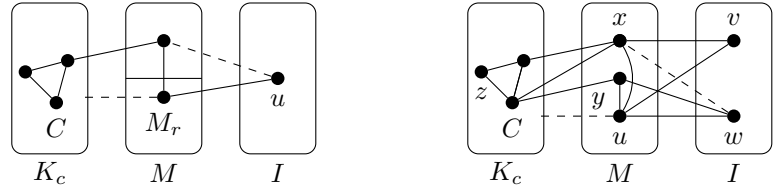
\begin{figure}
\centering
\begin{tikzpicture}
\begin{scope}[shift={(-3,0)}]
\draw[rounded corners] (-2,-1) rectangle (-1,1);
\node at (-1.5,-1.25) {$K_c$};
\draw[rounded corners] (-0.5,-1) rectangle (0.5,1);
\node at (0,-1.25) {$M$};
\draw[rounded corners] (1,-1) rectangle (2,1);
\node at (1.5,-1.25) {$I$};

\node[dot,label=below:$u$] (u) at (1.5,0) {};
\node[dot,label=below:{$M_r$}] (mr) at (0,-0.25) {};
\draw (0.5,0) to (-0.5,0);
\node[dot] (m) at (0,0.5) {};

\foreach \i in {1, ..., 3} {
	\ifthenelse{\i=2}{
		\node[dot,shift={(-1.5,0)},label=below:$C$] (t\i) at (\i*120+45:0.333) {};
	}{
		\node[dot,shift={(-1.5,0)}] (t\i) at (\i*120+45:0.333) {};
	}
};
\draw (t1) to (t2) to (t3) to (t1);
\draw (u) to (mr) to (m) to (t3);
\draw[dashed] (u) to (m);
\draw[dashed] (mr) to (-1.125,-0.25);
\end{scope}
\begin{scope}[shift={(3,0)}]
\draw[rounded corners] (-2,-1) rectangle (-1,1);
\node at (-1.5,-1.25) {$K_c$};
\draw[rounded corners] (-0.5,-1) rectangle (0.5,1);
\node at (0,-1.25) {$M$};
\draw[rounded corners] (1,-1) rectangle (2,1);
\node at (1.5,-1.25) {$I$};

\node[dot,label=above:$v$] (v) at (1.5,0.5) {};
\node[dot,label=below:$w$] (w) at (1.5,-0.5) {};
\node[dot,label=below:{$u$}] (u) at (0,-0.5) {};
\node[dot,label=above:{$x$}] (x) at (0,0.5) {};
\node[dot] (m) at (0,0.5) {};

\foreach \i in {1, ..., 3} {
	\ifthenelse{\i=2}{
		\node[dot,shift={(-1.5,0)},label=below:$C$] (t\i) at (\i*120+45:0.333) {};
	}{
		\ifthenelse{\i=1}{
			\node[dot,shift={(-1.5,0)},label=below:$z$] (t\i) at (\i*120+45:0.333) {};
		}{
			\node[dot,shift={(-1.5,0)}] (t\i) at (\i*120+45:0.333) {};
		}
	}
};
\draw (t1) to (t2) to (t3) to (t1);
\draw (u) to (v) to (x) to (t2) to (t3) to (x);
\draw (u) to (w);
\draw (x) to (u);
\draw[dashed] (u) to (-1.125,-0.5);
\draw[dashed] (w) to (x);
\end{scope}
\end{tikzpicture}
\caption{
Left: Proof of Claim~\ref{claim:I-nbor-McMu}.
Right: Proof of Claim~\ref{claim:same-nbors}.
Dashed lines indicate non-existing edges.
}\label{fig:claims-M}
\end{figure}

\begin{claim}\label{claim:I-nbor-McMu}
Every vertex of $I$ has a neighbour in $M_c \cup M_u$.
\end{claim}
\begin{proof}
Assume, for a contradiction, that the vertex $u \in I$ only has neighbours in $M_r$.
Since every vertex of $M_r$ has no neighbours in $K$ by Claim~\ref{claim:M-complete-Kci}~(i), a shortest $u$-$C$-path $Q$ in $G+F$ has length at least 3.
This implies that there is an induced $P_5$ in $G+F$ with vertices in $V(Q) \cup V(C)$, a contradiction; see Figure~\ref{fig:claims-M} left.
The claim follows.
\end{proof}

\begin{claim}\label{claim:same-nbors}
If $u \in M_r$, then every vertex of $N_G(u)$ has the same neighbours in $M_c \cup M_u$.
\end{claim}
\begin{proof}
Note that $N_G(u) \subseteq I$ by Claim~\ref{claim:M-complete-Kci}~(i) and since $M_r \subseteq N$ is independent.
Let $v, w \in N_G(u)$.
Assume, for a contradiction, that the vertex $x \in M_c \cup M_u$ is a neighbour of $v$, but not a neighbour of $w$.
By considering a shortest $u$-$C$-path in $G+F$ containing the vertices $v$ and $x$, we see that $ux \in F$.
Let $z \in K_c$ be a vertex that is not adjacent to $x$ in $G$, which exists, 
or $G$ would not be $3$-colourable.
Therefore, $wux$ together with a shortest $x$-$z$-path with internal vertices in $K_c$ contains an induced $P_5$ in $G+F$, a contradiction; see Figure~\ref{fig:claims-M} right.
As $v,w \in N_G(u)$ were arbitrary, the proof is complete.
\end{proof}

Claim~\ref{claim:G-Mr} is an important consequence of Claims~\ref{claim:I-nbor-McMu} and~\ref{claim:same-nbors}.

\begin{claim}\label{claim:G-Mr}
If there is 3-colouring $\psi'$ of $G - M_r$ that is an extension of $\psi$, then there is a 3-colouring of $G$ that is an extension of $\psi'$.
\end{claim}
\begin{proof}
Assume, for a contradiction, that for a vertex $u \in M_r$, there exist vertices $v_i \in N_G(u)$ with $\psi'(v_i) = i$ for every $i \in [3]$.
Note that $v_1,v_2,v_3 \in I$ by Claim~\ref{claim:M-complete-Kci}~(i).
By Claim~\ref{claim:I-nbor-McMu} and Claim~\ref{claim:same-nbors}, there exists a vertex $w \in M_c \cup M_u$ that is adjacent to $v_1$, $v_2$, and $v_3$, a contradiction to the fact that $\psi'$ is a $3$-colouring of $G-M_r$.
Therefore, for every vertex $u \in M_r$, there is a colour $i \in [3]$ such that no neighbour of $u$ in $G$ has colour $i$ under $\psi'$.
At this point,  choosing any such colour for every vertex of $M_r$ gives a 3-colouring of $G$ that is an extension of $\psi'$.
\end{proof}
\noindent
Claim~\ref{claim:G-Mr} implies that it suffices to decide if there is a 3-colouring $G-M_r$ that is an extension of $\psi$.
Hence, from now on, assume that $M_r = \emptyset$.
Recall that $J$ is the set of vertices of $I$ with no neighbour in $M_c$.
Consequently, by Claim~\ref{claim:I-nbor-McMu}, every vertex of $J$ has a neighbour in $M_u$.
Claim~\ref{claim:anti-complete} implies that every vertex of $M_c \cup M_u$ is either complete or anticomplete to each connected component of $G[I]$.
It follows that, if $u \in J$, then $J$ contains all vertices of the connected component of $u$ in $G[I]$.
We prove one more claim about the structure of the edges between $M_u$ and $J$.

\begin{figure}
\centering
\begin{tikzpicture}
\begin{scope}[shift={(0,0)}]
\draw[rounded corners] (-2,-1) rectangle (-1,1);
\node at (-1.5,-1.25) {$K_c$};
\draw[rounded corners] (-0.5,-1) rectangle (0.5,1);
\node at (0,-1.25) {$M_u$};
\draw[rounded corners] (1,-1) rectangle (2,1);
\node at (1.5,-1.25) {$J$};

\node[dot,label=below:$u$] (u) at (1.5,0) {};
\node[dot,label=below:{$v$}] (v) at (0,-0.5) {};
\node[dot,label=above:{$w$}] (w) at (0,0.5) {};

\foreach \i in {1, ..., 3} {
	\ifthenelse{\i=3}{
		\node[dot,shift={(-1.5,0)},label=above:$x$] (t\i) at (\i*120+45:0.333) {};
	}{
		\ifthenelse{\i=1}{
			\node[dot,shift={(-1.5,0)},label=above:$y$] (t\i) at (\i*120+45:0.333) {};
		}{
			\node[dot,shift={(-1.5,0)}] (t\i) at (\i*120+45:0.333) {};
		}
	}
};
\draw (t1) to (t2) to (t3) to (t1);
\draw (u) to (v) to (t2) to (t3) to (w) to (v);
\draw[dashed] (u) to (w);
\end{scope}
\end{tikzpicture}
\caption{
Proof of Claim~\ref{claim:Mu-J-complete}.
Dashed lines indicate non-existing edges.
}\label{fig:claim8}
\end{figure}
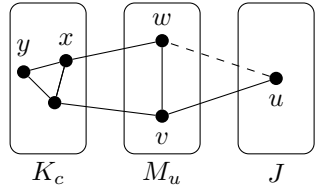

\begin{claim}\label{claim:Mu-J-complete}
If $M_u^i$ is nonempty for at least two $i \in [3]$, then the bipartite subgraph of $G$ spanned by the edges of $G$ with one end in $M_u$ and the other end in $J$ is complete.
\end{claim}
\begin{proof}
Let $K'$ be an arbitrary connected component of $G[J]$.
Keep in mind that $K'$ is a connected component of $G[I]$ too.
Let $i,j \in [3]$ with $i \neq j$ be such that $M_u^i$ is nonempty, and $K'$ has a neighbour $v$ in $M_u^j$.
Note that such $i$ and $j$ exist by assumption and Claim~\ref{claim:I-nbor-McMu}, and $v$ is complete to $K'$ by Claim~\ref{claim:anti-complete}.
We prove that $M_u^i$ is complete to $K'$.

Assume, for a contradiction, that $w \in M_u^i$ has no neighbour in $K'$.
Let $u$ be an arbitrary neighbour of $v$ in $K'$.
Consider a shortest $v$-$w$-path $Q$ with internal vertices in $K_c$, which exists since $G[K_c]$ is connected.
As $i \neq j$, the path $Q$ has length at least 3, and, by Claim~\ref{claim:M-complete-Kci}~(ii), the path $Q$ has length exactly 3.
Since $uQ$ is not an induced $P_5$ in $G+F$, we have $vw \in F$.
Let $x$ be the neighbour of $w$ in $Q$, let $k \in [3] \setminus \{i,j\}$, and let $y$ be a neighbour of $x$ in $K_c$ with colour $k$.
Note that $y$ exists since every vertex of $K_c$ has two neighbours in $K_c$ with two distinct colours.
Now $uvwxy$ is an induced $P_5$ in $G+F$, a contradiction; see Figure~\ref{fig:claim8}.
So $w$ has a neighbour in $K'$.
Claim~\ref{claim:anti-complete} implies that $w$ is complete to $K'$.
Since $w \in M_u^i$ was chosen arbitrarily, this proves that $M_u^i$ is complete to $K'$.

A similar argument shows that for $k \in [3] \setminus \{i,j\}$, if $M_u^k$ is nonempty, then $M_u^k$ is complete to $K'$ too.
By interchanging the roles of $i$ and $j$, we see that $M_u^j$ is complete to $K'$.
Since $K'$ was chosen arbitrarily, and since every such connected component of $G[J]$ has a neighbour in $M_u$ by Claim~\ref{claim:I-nbor-McMu}, this completes the proof.
\end{proof}

\subparagraph*{Colouring $G$}
At this point, we are in a position to decide if there is a $3$-colouring of $G$ that is an extension of $\psi$.
First, Claim~\ref{claim:KrLr} implies that $K_r \cup L_r$ is dominated by a set $D$ of at most two vertices of $K$. We branch on the (constantly many) consistent extensions of $\psi$ 
into $3$-colourings that additionally colour every vertex of $D$, which we call $\psi$ again for simplicity. 

Observe that every vertex of $K_r \cup L_r$ has a coloured neighbour now. As $M_r = \emptyset$, we now only need to achieve the same for $J$ in order to apply Lemma~\ref{lem:ext}.
If $J = \emptyset$, then Lemma~\ref{lem:ext} is directly applicable.
Therefore, we decide in polynomial time if there is a 3-colouring of $G$ that is an extension of $\psi$. If there is no such 3-colouring, then we backtrack.

We now assume that $J \not= \emptyset$. Since vertices in $J$ are not adjacent to $M_c$ by definition and $M_r = \emptyset$, $M_u \not= \emptyset$. If $M_u^i$ is nonempty for at least two $i \in [3]$, then we choose a vertex $v \in M_u$.
We branch on the extensions $\psi'$ of $\psi$ that additionally colour $v$.
Observe that now every vertex of $I$ has a coloured neighbour under $\psi'$ by the definition of $J$ and Claim~\ref{claim:Mu-J-complete}.
Now, Lemma~\ref{lem:ext} is applicable.
Therefore, we decide in polynomial time if there is a 3-colouring of $G$ that is an extension of $\psi'$.
If there is no such 3-colouring, then we backtrack.

It remains the case that there is exactly one $i \in [3]$ such that $M_u^i$ is nonempty. Every vertex in $J$ has neighbours only in $J \cup M_u^i$. In particular, for each connected component $K'$ of $G[J]$, which is a connected component of $G[I]$, the colour~$i$ may be used without creating conflicts outside of $K'$. Recall that $K'$ is bipartite by Claim~\ref{claim:isbipartite}. Hence, we wish to extend $\psi$ by, for each connected component $K'$ of $G[J]$, colouring one of its partite set by colour $i$. However, we cannot immediately decide which partite set, and make a small detour.

Let $K'$ be a connected component of $G[J]$ that contains an edge. Let $u$ be a neighbour of $K'$ in $M_u^i$. Since $u \in M_u^i$, it is adjacent to $K$, and thus neither complete nor anticomplete to $K$. Hence, $u$ is complete to $K'$ by Claim~\ref{claim:anti-complete}. Thus, $N_G(K')$ is complete to $K'$. Therefore, all vertices of $N_G(K')$ must receive the same colour in any $3$-colouring of $G$ that extends $\psi$. We ensure this first, for each such connected component $K'$, and then extend the colouring to $J$.

We apply the formula $\mathcal{F}$ of Lemma~\ref{lem:ext} to $G-J$, adapted as follows. For every connected component $K'$ in $G[J]$ that contains an edge, and for every two distinct vertices $u,v \in N_G(K')$, we add the clauses $(\bar{x}_u^k \vee x_v^k) \wedge (x_u^k \vee \bar{x}_v^k)$ to $\mathcal{F}$ for every $k \in [3] \setminus \{i\}$. These clauses ensure that two such vertices $u$ and $v$ receive the same colour. (Alternatively we could identify these vertices. At this point it does not matter that this does not preserve probe $P_5$-freeness.) After that, we resolve the satisfiability of the 2-SAT formula $\mathcal{F}$ in polynomial time~\cite{aspvall1979lineartime}. If $\mathcal{F}$ is not satisfiable, then there is no 3-colouring of $G$ that is an extension of $\psi$, and we backtrack. Otherwise, let $\psi'$ be a 3-colouring of $G-J$ obtained from a satisfying assignment of $\mathcal{F}$. We can extend $\psi'$ to a $3$-colouring of $G$ by assigning colour $i$ to isolated vertices in $G[J]$, and by assigning the remaining two colours to the nontrivial bipartite connected components of $G[J]$, which is possible due to the extra clauses we added to $\mathcal{F}$.
This completes the proof of Theorem~\ref{t-poly}.
\end{proof}

\section{The Proof of the NP-Completeness Part of Theorem~\ref{thm:3col-probe-p5}}\label{sec:probe-p6} 

Theorem~\ref{thm:3col-probe-p5} states that 
for  $t\geq 1$, \ProbekCol{$3$} on partitioned probe $P_t$-free graphs is polynomial-time solvable if $t\leq 5$ and \NP-complete if $t\geq 6$.
In Section~\ref{sec:probe-p5} we showed the polynomial part of Theorem~\ref{thm:3col-probe-p5}. We now show the \NP-completeness part by reducing from \PreExt{}, which has as input: an integer $k \geq 3$, a graph $G$ with at least $k$ vertices, and a partial $k$-colouring $\psi$ of $G$ that assigns $k$ vertices $v_1,\dots,v_k$ colours $1,\dots,k$, respectively. Can $\psi$ be extended to a $k$-colouring of $G$?
Bodlaender et al.~\cite{bodlaender1994scheduling} proved that this problem is \NP-complete, even if $k=3$, $G$ is bipartite 
and the precoloured vertices all belong to the same partition set of $G$. 
Cai~\cite{Ca03} used this result to prove that $3$-{\sc Colouring} is \NP-complete for graphs that become bipartite by deleting three edges.
We use the gadget of~\cite{Ca03} to show the following, which implies the \NP-completeness part of Theorem~\ref{thm:3col-probe-p5}.

\begin{theorem}\label{thm:3col-probe-p6}
\kCol{$3$} is \NP-complete on partitioned probe $(P_6,3P_2,2P_3)$-free graphs.
\end{theorem}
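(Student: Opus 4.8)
The plan is to give a polynomial-time reduction from \PreExt{} with $k=3$, starting from the bipartite instances guaranteed hard by Bodlaender et al.~\cite{bodlaender1994scheduling}, and then argue that the gadget of Cai~\cite{Ca03} produces a graph that is not merely $3$-colourable-or-not in the right way, but is in fact \emph{partitioned probe} $(P_6,3P_2,2P_3)$-free. So let $(G,\psi)$ be such an instance, where $G$ is bipartite with parts $A,B$, the three precoloured vertices $v_1,v_2,v_3$ all lie in $A$, and they receive colours $1,2,3$. Following Cai, I would attach a small fixed gadget that forces $v_1,v_2,v_3$ to receive pairwise distinct colours in any $3$-colouring; concretely one adds a triangle $abc$ together with edges (or a few extra edges) joining $a,b,c$ to $v_1,v_2,v_3$ so that the triangle pins down three distinct colours and propagates them. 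The key point carried over from~\cite{Ca03} is that the resulting graph $G'$ becomes bipartite after deleting only three edges, so its structure is still ``almost bipartite.''

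\medskip

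\textbf{First I would} fix the probe/non-probe partition. The natural choice, mirroring the warm-up observations in the excerpt (where bipartite graphs are probe $2P_2$-free via a split-graph completion), is to let one side of the bipartition be the non-probes $N$ and the other side together with the gadget vertices be the probes $P$; one must check $N$ is independent in $G'$, which it is since $N$ sits inside one part of a (nearly) bipartite graph and the gadget edges can be routed to avoid creating edges inside $N$. Then I would exhibit an explicit completion $F\subseteq\binom{N}{2}$ — most cleanly, turn $N$ into a clique — and verify that $G'+F$ is $(P_6,3P_2,2P_3)$-free. Here the ``split-graph'' intuition is the workhorse: making $N$ complete collapses long induced paths and large induced matchings, because any induced subgraph can contain at most one vertex of the clique-part $N$ in a stable set of size $\ge 2$, which sharply limits how $P_6$, $3P_2$, or $2P_3$ can appear. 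I would check each of the three forbidden subgraphs in turn, using that after the completion the graph is a blow-up-like union of a clique $N$ and the probe side $P$ of bounded relevant structure.

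\medskip

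\textbf{The equivalence} direction is then routine: $(G,\psi)$ is a yes-instance of \PreExt{} iff $G'$ is $3$-colourable, which is exactly Cai's correctness argument, unaffected by the choice of $P,N,F$ since those only certify membership in the probe class and do not change the graph $G'$ itself. Membership in \NP{} is immediate. Combined with Theorem~\ref{thm:3col-probe-p5}'s polynomial cases for $t\le 5$, and noting $P_6\subseteq_i P_t$ for $t\ge 6$ (so $(P_6,3P_2,2P_3)$-free graphs with the same partition remain in the relevant hard classes for all $t\ge 6$), this yields the \NP-completeness half of the dichotomy.

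\medskip

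\textbf{The main obstacle} I expect is not the reduction's correctness but the simultaneous $(P_6,3P_2,2P_3)$-freeness of $G'+F$: one must be careful that the gadget vertices (the triangle $abc$ and their attachment edges) do not create a short forbidden configuration with the clique $N$ and the opposite part. In particular, verifying $2P_3$- and $3P_2$-freeness requires controlling induced \emph{disconnected} subgraphs, where the completed clique $N$ helps for matchings but the gadget plus the $A$-side could still host two disjoint induced $P_3$'s. I would therefore likely need to place the gadget on the probe side and check, case by case, that any induced $P_6$, $2P_3$, or $3P_2$ would force two nonadjacent vertices inside $N$ — contradicting that $N$ is a clique in $G'+F$ — or would force an impossible adjacency pattern among the few gadget vertices. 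If the crudest completion ($N$ a clique) fails one of these checks, the fallback is to choose a sparser $F$ tailored so that exactly the needed long paths and matchings are destroyed while independence of $N$ in $G'$ is preserved.
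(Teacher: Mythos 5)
Your proposal matches the paper's proof essentially step for step: reduce from the bipartite \PreExt{} instances of Bodlaender et al., apply Cai's construction, take the bipartition class $B$ not containing the precoloured vertices as the non-probes, and witness probe $(P_6,2P_3,3P_2)$-freeness by completing $B$ into a clique (your ``main obstacle'' check does go through, since any induced $P_6$, $2P_3$ or $3P_2$ would need either two nonadjacent vertices in $B$ or too much structure inside $A$). The only substantive deviation is your description of Cai's gadget as a separate triangle $abc$ attached to $v_1,v_2,v_3$; the construction actually used --- and the one for which the freeness verification is cleanest, because then $A$ induces exactly one triangle plus an independent set --- simply turns $\{v_1,v_2,v_3\}$ into a clique.
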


\begin{proof}
We reduce an instance $(3,G,\psi,\{v_1,v_2,v_3\})$ of \PreExt{}, where $G$ is a bipartite graph with bipartition $A$ and $B$, and the precoloured vertices $v_1,v_2,v_3$ belong to $A$ without loss of generality, to an instance of \ProbekCol{$3$}.
As mentioned, this variant of \PreExt{} is still \NP-complete~\cite{bodlaender1994scheduling}.
The bipartition of $G$ can be computed in polynomial time.
Let $G'$ be the graph from~\cite{Ca03}, which is obtained in polynomial-time from $G$ by turning $\{v_1,v_2,v_3\}$ into a clique.
The graph $G'$ is probe $(P_6, 2P_3, 3P_2)$-free, which is witnessed by the fact that the graph obtained from $G'$ by turning the independent set $B$ into a clique is $(P_6, 2P_3, 3P_2)$-free.
It is easy to see that $(3,G,\psi,\{v_1,v_2,v_3\})$ is a yes-instance of \PreExt{} if and only if $(G', A, B)$ is a yes-instance of \ProbekCol{$3$}.
This proves that \ProbekCol{$3$} is \NP-hard on partitioned probe $(P_6,2P_3,3P_2)$-free graphs.
\end{proof}

\section{Additional Results and Concluding Remarks}\label{s-con}

In our paper, we considered the probe graph model introduced by Zhang et al.~\cite{zhang1994analgorithm}. Our aim was to research {\it to what extent} polynomial-time results for $H$-free graphs can be extended to probe $H$-free graphs.
We first gave a dichotomy for \Col{} restricted to (partitioned) probe $H$-free graphs and then showed our main result, which states that the known polynomial-time result for \kCol{$3$} for $P_5$-free graphs (whose yes-instances all have bounded mim-width) can be extended to partitioned probe $P_5$-free graphs (whose yes-instances even have unbounded sim-width). We also proved that this result cannot be generalized to partitioned probe $P_6$-free graphs unless $\P = \NP$ by showing \NP-completeness even for partitioned $(P_6,3P_2,2P_3)$-free graphs.
As \kCol{$3$} is polynomial-time solvable even for $P_7$-free graphs~\cite{BCMSSZ18} and $sP_2$-free graphs for all $s\geq 1$~\cite{DLRR12},
our results give a clear indication of the difference in computational complexity if not all edges of the input graph are known, under the probe graph model.

Our results also lead to a range of natural directions for future work. We sketch three directions below, before concluding with three broader questions.

\subsection{Towards a Dichotomy for \kCol{$3$} on Partitioned Probe $H$-Free Graphs}
The dichotomy for \kCol{$3$} for partitioned probe $H$-free graphs has not been fully settled.  We are able to prove the following additional result:

\begin{theorem}\label{thm:3col-probe-sp1p3}
For every $s \geq 0$, \ProbekCol{$3$} is polynomial-time solvable on partitioned probe $(P_3+sP_1)$-free graphs.
\end{theorem}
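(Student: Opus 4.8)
The plan is to reduce the problem to colouring the probes subject to a combinatorial constraint coming from the non-probes, and then to handle this via a bounded-size dominating set together with an adaptation of the $2$-SAT formulation of Lemma~\ref{lem:ext}. Write $(G,P,N)$ for the input and assume that $G$ is connected and $K_4$-free (otherwise it is a no-instance). The starting observation is that $N$ is independent and every non-probe $v$ satisfies $N_G(v)\subseteq P$. Hence a $3$-colouring of $G$ is exactly a $3$-colouring $\psi$ of $G[P]$ such that for every $v\in N$ the set $\psi(N_G(v))$ omits at least one colour (a \emph{non-rainbow} condition): given such a $\psi$, each non-probe can independently be assigned a free colour. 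In particular, if every connected component of $G[P]$ is bipartite we are immediately done, since colouring $P$ with $\{1,2\}$ leaves colour $3$ available for all of $N$; so I would assume some component of $G[P]$ is non-bipartite.

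Second, I would establish the structural facts that make the probes tractable. Since $G[P]$ is genuinely $(P_3+sP_1)$-free (the set $F$ touches no probe), any connected, $3$-colourable component $C$ of $G[P]$ has a dominating set of size at most $3s$: take an induced $P_3=abc$ in $C$ (if none exists, $C$ is a clique on at most $3$ vertices), and note that $W=V(C)\setminus N_C[\{a,b,c\}]$ has independence number at most $s-1$ (else $\{a,b,c\}$ together with an independent $s$-set in $W$ forms an induced $P_3+sP_1$), so $|W|\le 3(s-1)$ by $3$-colourability, and $\{a,b,c\}\cup W$ dominates $C$. The same forbidden-subgraph argument, now applied across components, shows that once some component of $G[P]$ contains an induced $P_3$, the number of components of $G[P]$ is at most $s$ (a transversal of $s$ distinct further components would extend such a $P_3$ to an induced $P_3+sP_1$). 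Combining these two bounds yields a dominating set $D$ of $P$ with $|D|\le 3s^2$. The remaining configuration, in which $G[P]$ is a disjoint union of cliques of order at most $3$, admits no such bound and I would treat it separately, directly as the constraint problem described next.

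Third, I would branch over the constantly many $3$-colourings of $D$, run the propagation of Algorithm~\ref{alg:prop} on $G[P]$, and then set up the formula $\mathcal{F}$ of Lemma~\ref{lem:ext} on the uncoloured probes (each of which now has a list of size at most $2$, because $D$ dominates $P$). To this I would add, for every non-probe $v$, clauses enforcing $\psi(N_G(v))\neq[3]$. Whenever at least two colours are already fixed on $N_G(v)$ this is a conjunction of unit clauses forbidding the third colour, which is harmless for $2$-SAT; the same holds whenever $N_G(v)$ contains two adjacent probes, as they necessarily receive distinct colours.

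The main obstacle is the remaining case: a non-probe $v$ whose neighbourhood $N_G(v)$ is an independent set of $G[P]$ on which at most one colour has been forced. Here the condition $\psi(N_G(v))\neq[3]$ is a disjunction of conjunctions and is not directly expressible in $2$-CNF, exactly as the $J$-components were not directly expressible in the proof of Theorem~\ref{t-poly}. I expect to resolve this by exploiting $(P_3+sP_1)$-freeness of $G+F$ to show that such ``star-like'' non-probes are highly constrained: their independent neighbourhoods are confined to the boundedly many components, and through the chosen odd cycle together with the fixed dominating set $D$ one can either force a second colour into $N_G(v)$ or branch over the boundedly many relevant choices of the omitted colour, thereby pre-committing or equating the corresponding probes in the spirit of the additional clauses used at the end of the proof of Theorem~\ref{t-poly}. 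Once every non-rainbow condition has been reduced to a $2$-SAT constraint, satisfiability is decided in polynomial time, and ranging over all branches decides $3$-colourability of $G$.
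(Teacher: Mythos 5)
Your overall architecture for the case that $G[P]$ contains an induced $P_3$ --- a dominating set $D$ of $P$ of size $O(s^2)$ obtained from $(P_3+sP_1)$-freeness of $G[P]$, branching on its colours, then a $2$-SAT finish --- matches the paper's treatment of that case, and your construction of $D$ is essentially correct. But there are two genuine gaps. First, the configuration you set aside, where $G[P]$ is a disjoint union of cliques of order at most $3$, cannot be ``treated directly as the constraint problem described next'': that constraint problem relies on the bounded dominating set $D$, which does not exist there. The paper needs a separate argument for this case. After reducing so that every non-probe has degree at least $3$ (non-probes of degree at most $2$ can always be coloured last and are deleted), such a non-probe $u$ must have neighbours in at least two clique components (else $G$ contains a $K_4$); two neighbours $v,w$ in distinct components give an induced $P_3=vuw$, and $(P_3+sP_1)$-freeness of $G+F$ then bounds the number of non-neighbours of $u$ in $P$ by $3(s+2)$. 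One then branches over which subset of these non-neighbours forms the $P$-part of $u$'s colour class and checks that the remainder is bipartite. None of this appears in your proposal.

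Second, your ``main obstacle'' paragraph is a conjecture, not a proof: the mechanism you gesture at (the ``chosen odd cycle'') is imported from the proof of Theorem~\ref{t-poly} and plays no role in this problem. The ingredient you are missing is the same $P_3$-centred-at-a-non-probe observation as above: after the degree reduction, under any $3$-colouring $\psi$ every non-probe $u$ has two neighbours of some common colour $i$; these two like-coloured (hence non-adjacent) probes together with $u$ induce a $P_3$, and adjoining any $s$ non-neighbours of $u$ from $\psi^{-1}(i)\cap P$ would create an induced $P_3+sP_1$ that survives in $G+F$ (all vertices but $u$ are probes). So $u$ can miss fewer than $s$ vertices of each colour class inside $P$ that it meets twice. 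This is what lets the paper guess, for each colour, up to $s$ vertices of that colour class outside $D$ (a set $S$ with $|S|\le 3s$, polynomially many choices) so that $D\cup S$ dominates $N$ as well as $P$; then every vertex has a coloured neighbour and Lemma~\ref{lem:ext} applies verbatim, with no need for your non-rainbow reformulation. That reformulation is in fact a liability: the condition $\psi(N_G(v))\neq[3]$ is not $2$-CNF-expressible in general, and your claim that the presence of two adjacent probes in $N_G(v)$ makes it so is incorrect, since the colour they jointly exclude is not determined until their colours are fixed. Without these two ingredients the proof does not go through.
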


\begin{proof}
Let $(G,P,N)$ be a partitioned probe $(P_3+sP_1)$-free graph. Let $F \subseteq {N \choose 2}$ be such that $G + F$ is $(P_3+sP_1)$-free.
We define $F$ only for verifying correctness; the polynomial-time algorithm does not use $F$.
We may assume that $G$ is connected; otherwise, we run the algorithm on each connected component. We verify in polynomial time that $G$ has no clique on $4$ vertices; otherwise, $G$ is not $3$-colourable. We verify in polynomial time that each vertex of $N$ has degree at least $3$; otherwise, we can remove such a vertex $v$ and run the algorithm on $G-v$, as there is always a free colour for $v$ if $G-v$ is $3$-colourable. We may assume that $N$ has at least one vertex; otherwise, we just solve \kCol{$3$} in polynomial time~\cite{HKLSS10}, as $G$ would be $(P_3+sP1)$-free.

We distinguish two cases, depending on whether $G[P]$ has an induced $P_3$. This can be checked in polynomial time.

\subparagraph*{Case 1: $G[P]$ has no induced $P_3$.} We need the following claim:

\begin{claim} \label{clm:p3:nop3}
If $G$ is $3$-colourable, then for every $u \in N$ it holds that:
\begin{itemize}
\item[(i)] $u$ has at most $3(s+2)$ non-neighbours in $P$;
\item [(ii)] there exists a subset $S$ of the non-neighbours of $u$ in $P$ such that $S$ is independent and $G-(S \cup (N \setminus N_G(S)))$ is bipartite.
\end{itemize}
\end{claim}
\begin{proof}
Let $\psi$ be a $3$-colouring of $G$. We start by proving (i). Since $G[P]$ is $P_3$-free and $3$-colourable by assumption, $G[P]$ is a disjoint union of cliques of size at most~$3$. If $u$ has neighbours in at most one connected component of $G[P]$, then since $u$ has degree at least~$3$, that connected component has size at least~$3$ and $G$ contains a clique on $4$ vertices, a contradiction. Hence, $u$ has neighbours in at least two connected components of $G[P]$. Let $v,w$ be neighbours of $u$ in distinct connected components of $G[P]$. Suppose for sake of contradiction that $u$ has more than $3(s+2)$ non-neighbours in $P$. Since each connected component of $G[P]$ has size at most~$3$, $u$ has non-neighbours in at least $s+2$ distinct connected components of $G[P]$. Hence, $u$ has $s$ non-neighbours in distinct connected components that do not contain $v$ or $w$. Hence, $G+F$ contains an induced $P_3+sP_1$, a contradiction. It follows that $u$ has at most $3(s+2)$ non-neighbours in $P$.

For (ii), without loss of generality, assume that $\psi(u)=3$. Let $S = \psi^{-1}(3) \cap P$; clearly, $S$ is a subset of the non-neighbours of $u$ in $P$ and $S$ is independent. Moreover, for every $v \in P \setminus S$, $\psi(v) \in \{1,2\}$. Also, for every $w \in N_G(S) \cap N$, $\psi(w) \in \{1,2\}$. Hence, $G-(S \cup (N \setminus N_G(S)))$ is bipartite.
\end{proof}

We are now ready for the algorithm. Let $u \in N$, which exists as $N$ is nonempty. If $u$ has more than $3(s+2)$ non-neighbours in $P$, then return that $G$ is not $3$-colourable. This is correct by Claim~\ref{clm:p3:nop3}(i). Branch on each subset $S$ of the non-neighbours of $u$ in $P$. If $S$ is not an independent set or $G-(S \cup (N \setminus N_G(S)))$ is not bipartite, reject the branch; otherwise, accept it. The branching algorithm takes polynomial time, since $|S| \leq 3(s+2)$. If there is an accepted branch, then clearly, using the $2$-colouring of $G-(S \cup (N \setminus N_G(S)))$ plus assigning colour $3$ to $S \cup (N \setminus N_G(S))$ is a $3$-colouring of $G$. By Claim~\ref{clm:p3:nop3}(ii), there is an accepted branch if $G$ is $3$-colourable. Hence, the algorithm is correct and runs in polynomial time.

\subparagraph*{Case 2: $G[P]$ contains an induced $P_3$.} We describe the algorithm. We distinguish two cases, depending on $|P|$.

\subparagraph*{Case 2a: $|P| \leq 4s+1$.} Branch on each $3$-colouring $\psi'$ of $P$. There are constantly many such branches, as $|P| \leq 4s+1$ and $s$ is fixed. Since $G$ is connected, every vertex of $N$ is adjacent to some vertex of $P$. Hence, Lemma~\ref{lem:ext} is directly applicable. Therefore, we decide in polynomial time if there is a $3$-colouring of $G$ that is an extension of $\psi'$. If there is no such $3$-colouring, then we backtrack. This part of the algorithm is clearly correct and runs in polynomial time.

\subparagraph*{Case 2b: $|P| > 4s+1$.} In polynomial time, find an induced subgraph $Q$ of $G[P]$ isomorphic to $P_3$. In polynomial time, find a maximal independent set $I$ of $G[P] \setminus N_G[V(Q)]$. If $|I| \geq s$, then $I \cup V(Q)$ induces a $P_3+sP_1$ in $G[P]$ and thus in $G+F$, a contradiction. Hence, $|I| \leq s-1$. Let $D = V(Q) \cup I$. Clearly, every vertex of $P$ is dominated by $D$ and $|D| \leq s+2$.

Branch on all disjoint subsets $S$ of $P \setminus D$ such that $|S| \leq 3s$. There are polynomially many such branches, as $s$ is fixed. If there is a $u \in N$ such that $u$ is not adjacent to $D \cup S$, then backtrack. Otherwise, branch on each $3$-colouring $\psi'$ of $D \cup S$. There are constantly many such branches, as $|D \cup S| \leq 4s+2$ and $s$ is fixed. By assumption and construction, every vertex of $P \cup N$ is either in or adjacent to some vertex of $D \cup S$. Hence, Lemma~\ref{lem:ext} is directly applicable. Therefore, we decide in polynomial time if there is a $3$-colouring of $G$ that is an extension of $\psi'$. If there is no such $3$-colouring, then we backtrack.

We now show that this part of the algorithm is correct. If there is a $3$-colouring $\psi$ of $G$, then for every $i \in [3]$, pick any set $S_i \subseteq \psi^{-1}(i) \cap (P \setminus D)$ such that $|S_i| = \min\{|\psi^{-1}(i) \cap (P \setminus D)|, s\}$. Consider $S = S_1 \cup S_2 \cup S_3$. Since $|S_1 \cup S_2 \cup S_3| \leq 3s$, this set $S$ will be considered by the algorithm.

Let $u \in N$. We claim that $u$ has a neighbour in $D \cup S$. Suppose not. Since $u$ has degree at least~$3$, there is a colour $i \in [3]$ such that $u$ has at least two neighbours of colour~$i$. Since $u \in N$, these neighbours are in $P$, and any two of them together with $u$ induce a $P_3$. If $u$ has no neighbours in $D \cup S$, then it has no neighbours in $S_i \cup (\psi^{-1}(i) \cap D)$ in particular. Since $u$ has neighbours in $\psi^{-1}(i) \cap P$, it holds that $S_i \cup (\psi^{-1}(i) \cap D) \subset \psi^{-1}(i) \cap P$. Hence, $|S_i| \geq s$ by the choice of $S_i$. Thus, $u$ has at least $s$ non-neighbours in $\psi^{-1}(i) \cap P$. Any $s$ of them, together with the $P_3$, yields an induced $P_3+sP_1$ in $G+F$, a contradiction. Hence, $u$ has a neighbour in $D \cup S$.

It follows that for this choice of $S$, the algorithm will not backtrack. Then for the $3$-colouring $\psi'$ of $D \cup S$ that is the restriction of $\psi$ to $D \cup S$, the algorithm will succeed to find a $3$-colouring by Lemma~\ref{lem:ext}.
\end{proof}

\noindent
Theorem~\ref{thm:3col-probe-p5}, Theorems~\ref{thm:3col-probe-p6}--\ref{thm:3col-probe-sp1p3} and the result that \kCol{$3$} is \NP-complete on $H$-free graphs if $H$ is not a linear forest~\cite{EHK98,holyer1981thenpcompleteness} leave only the following open cases:

\begin{open}
Determine the complexity of \ProbekCol{3} on partitioned probe $H$-free graphs when $H$ is $2P_2+sP_1$ ($s \geq 1$), $P_3+P_2+sP_1$ ($s \geq 0$), $P_4+sP_1$ ($s \geq 1$), $P_4+P_2+sP_1$ ($s \geq 0$), or $P_5+sP_1$ ($s \geq 1$).
\end{open}

\subsection{Solving \kCol{$k$} on Probe $H$-Free Graphs for $k\geq 4$}

While this paper has mostly focussed on \kCol{$3$}, looking more generally at \kCol{$k$} gives rise to interesting problems and observations. Since \kCol{$k$} is polynomial on $P_5$-free graphs~\cite{HKLSS10} even for all $k\geq 3$, we ask:

\begin{open}\label{o-k4}
For $k \geq 4$, determine the complexity of \ProbekCol{$k$} on partitioned probe $P_5$-free graphs.
\end{open}

\noindent
Crucial properties in our proof for \ProbekCol{$3$} on partitioned probe $P_5$-free graphs, such as the fact that there is a single non-bipartite connected component and that no vertex is complete to the cycle~$C$ we pick in it, no longer hold if $k\geq 4$. We do note that probe $(K_s,P_5)$-free graphs have bounded mim-width for every $s\geq 1$, due to $(K_s,P_5)$-free graphs having bounded mim-width for every $k\geq 1$~\cite{brettel2022listkcolouring} and Proposition~\ref{prop:probe-cw-mimw}. Hence, as we may assume that an input graph for {\sc $4$-Colouring} is $K_5$-free,
a good starting point is to consider {\sc $4$-Colouring} for $K_5$-free partitioned probe $P_5$-free graphs, or even for $K_5$-free partitioned probe $2P_2$-free graphs. 

We recall that for solving $3$-{\sc Colouring} on probe $P_5$-free graphs, we only need the partition into $P$ and $N$.
To solve Problem~\ref{o-k4}, it would also be interesting to research if the problem becomes easier under the assumption that we also know the set of edges $F$.

As another starting point for solving Problem~\ref{o-k4}, we can show the following result:

\begin{theorem}\label{thm:p2sp1}
For every $s \geq 0$ and $k \geq 1$, \kCol{$k$} is polynomial-time solvable on (not necessarily partitioned) probe $(P_2+sP_1)$-free graphs.
\end{theorem}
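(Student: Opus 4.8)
The plan is to reduce \kCol{$k$} on probe $(P_2+sP_1)$-free graphs to the same problem on $H$-free graphs for some path-free forbidden structure, using a structural observation that completely sidesteps the unknown edge set $F$. The key claim I would establish first is the following: \emph{every probe $(P_2+sP_1)$-free graph is in fact $((s+1)P_2)$-free}. This is exactly the statement the introduction flags as the resolved open case ``for every $s\geq 0$, all probe $(P_2+sP_1)$-free graphs are $(s+1)P_2$-free''. Once this is in hand, the theorem follows immediately from known results, since \kCol{$k$} is polynomial-time solvable on $sP_2$-free graphs for all $s\geq 1$ and all fixed $k$~\cite{DLRR12}: a probe $(P_2+sP_1)$-free graph is $((s+1)P_2)$-free, hence we apply the $sP_2$-free algorithm with parameter $s+1$. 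Crucially, this argument needs only the graph $G$ itself, not the partition into probes and non-probes, which is why the result holds for the non-partitioned class.

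The heart of the proof is therefore the structural claim, which I would argue by contradiction. Suppose $G$ is probe $(P_2+sP_1)$-free but contains an induced $(s+1)P_2$, say on edge-disjoint edges $a_1b_1, a_2b_2, \dots, a_{s+1}b_{s+1}$, pairwise anticomplete. Let $N$ be the set of non-probes and $F\subseteq\binom{N}{2}$ the certifying edge set with $G+F$ being $(P_2+sP_1)$-free. The goal is to exhibit an induced $P_2+sP_1$ in $G+F$, contradicting its $(P_2+sP_1)$-freeness. The idea is that within the $(s+1)$ matching edges, at least one edge must be a ``real'' edge of $G$ (i.e.\ present already in $G$, not added by $F$) that remains an edge after we possibly discard a carefully chosen set; the remaining $s$ matching edges each contribute one endpoint as an isolated vertex. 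The delicate point is that adding the edges of $F$ can only \emph{create} edges between vertices of $N$, never destroy the edge $a_ib_i$, and can only \emph{add} adjacencies, so the matching edges themselves survive in $G+F$; what must be controlled is that the $s$ chosen isolated endpoints stay anticomplete to the retained edge in $G+F$.

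The main obstacle, and the place I would spend the most care, is precisely this interaction between $F$ and the anticompleteness required for an induced $P_2+sP_1$. In $G$ the $(s+1)P_2$ is induced, so all cross-pairs are non-adjacent; but some of those non-adjacent pairs may have both endpoints in $N$, and $F$ could have added edges among them, potentially destroying the independence structure we want in $G+F$. The resolution is a counting or pigeonhole argument: from the $s+1$ matching edges I want to select one edge $a_jb_j$ together with $s$ single vertices (one endpoint from each of the other $s$ edges) so that in $G+F$ the single vertices are pairwise non-adjacent and anticomplete to $\{a_j,b_j\}$. Since each matching edge gives a choice of two endpoints, and the added edges of $F$ lie only among $N$-vertices, I would choose probe endpoints whenever possible (a probe endpoint is never incident to an $F$-edge, so its non-adjacencies in $G$ are preserved in $G+F$); the worst case to handle is when too many endpoints are forced to lie in $N$, where one argues that $F$-edges among the selected single vertices would themselves, together with a retained matching edge, already produce a forbidden induced $P_2+sP_1$ in $G+F$. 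Making this case analysis tight — showing one can always extract a clean induced $P_2+sP_1$ in $G+F$ — is the only real work; the reduction to~\cite{DLRR12} afterwards is immediate.
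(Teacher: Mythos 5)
Your proposal follows exactly the paper's route: first prove that every probe $(P_2+sP_1)$-free graph is $(s+1)P_2$-free, then invoke the known polynomial-time algorithm for \kCol{$k$} on $sP_2$-free graphs~\cite{DLRR12}, and observe that the partition is not needed. The outline is correct, but the ``case analysis'' you defer as ``the only real work'' is vacuous, and noticing this is the whole proof of the structural claim: since $N$ is an independent set of $G$, no edge of $G$ has both endpoints in $N$, so \emph{every} one of the $s+1$ matching edges $a_ib_i$ of the induced $(s+1)P_2$ has at least one endpoint in $P$. Hence your strategy of ``choosing probe endpoints whenever possible'' always succeeds for all $s$ discarded edges simultaneously: keep any one edge $a_jb_j$ (an edge of $G$, hence of $G+F$), and from each of the other $s$ edges take its endpoint lying in $P$. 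These $s$ vertices are probes, so no edge of $F\subseteq\binom{N}{2}$ is incident to any of them; all the non-adjacencies among them and between them and $\{a_j,b_j\}$ that hold in $G$ therefore persist in $G+F$, giving an induced $P_2+sP_1$ in $G+F$ and the desired contradiction. The ``worst case where too many endpoints are forced to lie in $N$'' cannot occur, so no pigeonhole or counting argument is needed.
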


\begin{proof}
We first show that every probe $(P_2+sP_1)$-free graph is $(s+1)P_2$-free.
Let $(G,P,N)$ be a partitioned probe $(P_2+sP_1)$-free graph. Let $F \subseteq {N \choose 2}$ be such that $G + F$ is $(P_2+sP_1)$-free. Suppose $G$ has an induced subgraph $H$ isomorphic to $(s+1)P_2$. If no vertices of $H$ are in $N$, then $H$ is contained in $G[P]$, and thus $G+F$ has an induced subgraph isomorphic to $P_2+sP_1$, a contradiction. Hence, at least one vertex of $H$ is in $N$. As $N$ is independent in $G$, it holds that if $u \in V(H) \cap N$, then the neighbour of $u$ in $H$ must be in $P$. Hence, for each edge of $H$, at least one endpoint is in $P$. Combined, this implies that $G+F$ has an induced $P_2+sP_1$, a contradiction.
As \kCol{$k$} can be solved in polynomial time on $sP_2$-free graphs for all $k \geq 1$ (see e.g.~\cite[in Theorem~5]{DLRR12} or~\cite[Theorem~6]{GJPS17}), the result follows.
\end{proof}

\noindent
We note that for $k=3$, Theorem~\ref{thm:p2sp1} does not need the partition $V=P\cup N$, whereas Theorem~\ref{thm:3col-probe-sp1p3} does, so the two theorems are not comparable.

\subsection{Towards Recognizing Probe $H$-Free Graphs}
The results in this paper give nontrivial insights into the structure of probe $H$-free graphs in general, and probe $P_5$-free graphs in particular. Several interesting challenges remain. First, we note that we could prove Theorem~\ref{thm:p2sp1} by showing that for all $s\geq 1$, every probe $(P_2+sP_1)$-free graphs is $(s+1)P_2$-free, and we ask:

\begin{open}\label{o-p26}
Are there  other sets~${\cal H}$, for which there exists a {\it finite} set ${\cal H'}$ such that every probe ${\cal H}$-free graph is ${\cal H'}$-free? 
\end{open}

\noindent
An inclusion as in Problem~\ref{o-p26} is in general strict, e.g., probe $P_5$-free graphs form a hereditary graph class that is not finitely defined. To explain this, as probe $H$-free graphs are closed under vertex deletion, there exists a unique minimal set of graphs ${\cal F}_H$ such that a graph is probe $H$-free if and only if it is ${\cal F}_H$-free. We can show that ${\cal F}_{P_5}$ is infinite as follows.

We first need several auxiliary results. We recall that every even cycle is bipartite. Hence, we can let $N$ be one of the two independent sets of the bipartition and change $N$ into a clique to find 
that every even cycle $C_{2\ell}$ ($\ell\geq 2$) is probe $2P_2$-free and thus probe $P_5$-free, just like the $C_3$ and $C_5$, both of which are even $2P_2$-free. However, for odd cycles of length at least~$7$, this is no longer true:

\begin{lemma}\label{l-inf}
For every $\ell\geq 3$, the cycle $C_{2\ell+1}$ is not probe $P_5$-free (but is probe $P_6$-free).
\end{lemma}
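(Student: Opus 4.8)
The plan is to prove the two halves separately. For the negative half I must show that \emph{no} certificate works: for every partition of $V(C_{2\ell+1})$ into probes $P$ and an independent set $N$ of non-probes, and every $F\subseteq\binom{N}{2}$, the graph $C_{2\ell+1}+F$ contains an induced $P_5$. The key structural observation is that every probe vertex still has degree exactly $2$ in $C_{2\ell+1}+F$, since $F$ only adds edges inside $N$. Hence the non-probes cut the cycle into \emph{arcs} of consecutive degree-$2$ probe vertices, and inside any such arc together with its two bounding non-probes the only pair that can possibly be a chord is the edge between those two non-probes. I would first peel off the small cases $|N|\le 2$: then $F$ is empty or a single chord; if $F=\emptyset$ the cycle itself (length at least $7$) contains an induced $P_5$, and if $F$ is one chord the longer of the two arcs it creates has at least $\ell\ge 3$ internal degree-$2$ vertices, which readily produces an induced $P_5$.

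So assume $|N|\ge 3$ and write the non-probes cyclically as $n_1,\dots,n_k$ with an arc $A_i$ of $m_i\ge 1$ probes between $n_i$ and $n_{i+1}$. The heart is a short chain of reductions. (i) If some $m_i\ge 3$, then the last vertex $r$ of the preceding arc, $n_i$, and the first three vertices of $A_i$ induce a $P_5$ (the four probes have degree $2$ and $n_i$ is non-adjacent to all but the first of them, so no chord appears); hence all arcs have length at most $2$. (ii) The count $k+\sum_i m_i=2\ell+1$ rules out all $m_i=1$, so some arc $A_i$ has length exactly $2$. (iii) If the next arc $A_{i+1}$ also has length $2$, then the last two vertices of $A_i$, the non-probe $n_{i+1}$, and the first two vertices of $A_{i+1}$ induce a $P_5$; so we may assume $A_{i+1}$ has length $1$, with unique vertex $b$. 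Writing $A_i=\{p_1,p_2\}$ with $p_2$ adjacent to $n_{i+1}$ and letting $c$ be the first vertex of $A_{i+2}$, I then case-split on the single edge $n_{i+1}n_{i+2}$: if $n_{i+1}n_{i+2}\notin F$ then $p_2\,n_{i+1}\,b\,n_{i+2}\,c$ is an induced $P_5$, and if $n_{i+1}n_{i+2}\in F$ then $p_1\,p_2\,n_{i+1}\,n_{i+2}\,c$ is an induced $P_5$. In either five-vertex set the only doubtful pair is $\{n_{i+1},n_{i+2}\}$, and the case distinction is chosen precisely so that one of the two sets is chordless. This exhausts all cases and shows $C_{2\ell+1}$ is not probe $P_5$-free.

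For the positive half I would give an explicit certificate. Take $N=\{v_1,v_3,\dots,v_{2\ell-1}\}$, a maximum independent set of size $\ell$, and set $F=\binom{N}{2}$, turning $N$ into a clique. Then $P=\{v_2,v_4,\dots,v_{2\ell},v_{2\ell+1}\}$, and a direct inspection shows that $G[P]$ consists of the single edge $v_{2\ell}v_{2\ell+1}$ together with isolated vertices. In $C_{2\ell+1}+F$ any induced path meets the clique $N$ in at most two vertices, which must be consecutive on the path; deleting them leaves at most two subpaths inside $P$, each contained in one component of $G[P]$ and hence of order at most $2$, with at most one of the two using the unique edge. Therefore every induced path has at most $2+3=5$ vertices, so $C_{2\ell+1}+F$ is $P_6$-free, witnessing that $C_{2\ell+1}$ is probe $P_6$-free.

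I expect the negative half to be the main obstacle, and within it the delicate points are ensuring that the five chosen vertices are genuinely distinct (which is exactly why the cases $|N|\le 2$ are handled first, and why $|N|\ge 3$ guarantees that $A_i,A_{i+1},A_{i+2}$ and $n_{i+1},n_{i+2}$ are pairwise distinct) and verifying that degree-$2$ probes cannot introduce hidden chords. The decisive idea is the final case distinction on the \emph{single} non-probe edge $n_{i+1}n_{i+2}$, which turns the adversary's freedom in choosing $F$ against itself: whichever way that one edge is decided, a chordless five-vertex path survives.
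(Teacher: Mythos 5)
Your proof is correct and essentially the same as the paper's: both arguments bound the maximal runs of consecutive probes by $2$, locate a $2$-run followed by a $1$-run, and conclude by case-splitting on whether the single candidate $F$-edge between the two flanking non-probes is present (your Case~B path $p_1p_2n_{i+1}n_{i+2}c$ is the paper's $u_1u_2u_3u_5u_6$), while the positive half in both proofs turns a maximum independent set into a clique. The only spot needing a one-line patch is the $|N|=2$ sub-case when $\ell=3$ and the longer arc has exactly three internal vertices: there that arc together with the chord is an induced $C_5$ rather than a $P_5$, so the five-vertex window must be shifted to include one vertex of the shorter arc.
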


\begin{proof}
Let $\ell\geq 3$. For a contradiction, assume that $G=C_{2\ell+1}$ is probe $P_5$-free. Then there exists an independent set $N$ in $G$ such that $G+F$ is $P_5$-free where $F$ is some set of new edges that we add between vertices of $N$. Since $\ell\geq 3$, we find that $G$ is not $P_5$-free. Hence, $N$ is not empty. As before, let $P=V(G)\setminus N$.

As $N$ is an independent set and $G$ is not bipartite, $P$ is not an independent set. As $N$ is not empty and $G$ is a cycle, this means that one of the connected components of $G[P]$ is a path $R=u_1\cdots u_r$ for some $r\geq 2$. By definition, no edge of $F$ is incident to a vertex of $R$. 
As $G+F$ is $P_5$-free, this means that $r\leq 4$. 

Let $u_{r+1}$ be the neighbour of $u_r$ on $G$ that does not belong to $R$, so $u_{r+1}\in N$.
Let $u_{r+2}$ be the neighbour of $u_{r+1}$ on $G$ that does not belong to $R$. As $u_{r+1}$ belongs to $N$ and $N$ is an independent set, $u_{r+2}$ belongs to $P$. Let  $R'=u_1\cdots u_{r+2}$. Note that $R'$ is an induced path in $G$, as $\ell\geq 3$ and $r\leq 4$. Moreover, as $F$ contains no edges incident to a vertex of $R$, we find that $R'$ is also an induced path in $G+F$. As $G+F$ is $P_5$-free, this means that $r=2$.

Let $u_5$ be the neighbour of $u_4$ not equal to $u_3$. As $\ell\geq 3$, we find that $u_1u_2u_3u_4u_5$ is an induced $P_5$ in $G$. If $u_5\in P$, then $u_1u_2u_3u_4u_5$ is also an induced $P_5$ in $G+F$. Hence, $u_5$ belongs to $N$. By the same arguments, $F$ must contain the edge~$u_3u_5$. 

Let $u_6$ be the neighbour of $u_5$ not equal to $u_4$. As $u_5\in N$ and $N$ is an independent set, $u_6$ belongs to $P$. As $\ell\geq 3$, the path $u_1u_2u_3u_5u_6$ is an induced $P_5$ in $G$. Again, as $F$ contains no edge incident to a vertex of $P$, we find that $u_1u_2u_3u_5u_6$ is even an induced $P_5$ in $G+F$, a contradiction.

Finally, to show that $G$ is probe $P_6$-free, we let $N$ be a maximal independent set in $G$ and change $N$ into a clique to obtain a graph $G'$. We note that $G[P]$ is the disjoint union of an edge and a set of isolated vertices. Hence, every induced path in $G'$ has at most five vertices.
\end{proof}

\noindent
We use Lemma~\ref{l-inf} to prove the following:

\begin{lemma}\label{l-2}
Probe $P_5$-subgraph-free graphs form a proper subclass of $(C_7,C_9,C_{11},\ldots)$-free graphs.
\end{lemma}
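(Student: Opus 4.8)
The plan is to prove the inclusion and its strictness separately. The inclusion is a short consequence of Lemma~\ref{l-inf} together with heredity, whereas the strictness is where the real difficulty lies.

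For the inclusion, I would use that the class of probe $P_5$-free graphs is closed under vertex deletion: given a witnessing partition $(P,N)$ and an edge set $F\subseteq\binom{N}{2}$ with $G+F$ being $P_5$-free, restricting $P$, $N$, and $F$ to an induced subgraph $G[S]$ yields $(G+F)[S]=G[S]+F'$ with $F'=F\cap\binom{N\cap S}{2}$, which is $P_5$-free, so $G[S]$ is again probe $P_5$-free. Now suppose $G$ is probe $P_5$-free and, for contradiction, contains an induced $C_{2\ell+1}$ with $\ell\geq 3$. By heredity this $C_{2\ell+1}$ would be probe $P_5$-free, contradicting Lemma~\ref{l-inf}. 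Hence $G$ has no induced odd cycle of length at least $7$, i.e.\ $G$ is $(C_7,C_9,C_{11},\ldots)$-free. Combined with the observation that deleting any vertex from $C_{2\ell+1}$ leaves a disjoint union of paths, which is bipartite and therefore probe $P_5$-free, this also shows each $C_{2\ell+1}$ is a \emph{minimal} obstruction, which is what ultimately makes ${\cal F}_{P_5}$ infinite.

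For the strictness, I would exhibit one graph $G^\star$ that is $(C_7,C_9,C_{11},\ldots)$-free but not probe $P_5$-free. My tool would be the following deletion criterion, whose proof is immediate: if $G^\star-N$ contains an induced $P_5$ for \emph{every} independent set $N$, then $G^\star$ is not probe $P_5$-free, since for any witnessing $F\subseteq\binom{N}{2}$ we have $(G^\star+F)[V(G^\star)\setminus N]=G^\star-N$, which still contains that induced $P_5$. Equivalently, I must arrange the vertex sets of the induced $P_5$'s of $G^\star$ so that no independent set is a transversal of them, while keeping all induced cycles of $G^\star$ short.

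The \textbf{main obstacle} is precisely this construction together with its verification. The difficulty is a genuine tension between the two requirements: graphs sparse enough to avoid long induced odd cycles tend to have large independent sets that simultaneously destroy every induced $P_5$ (every disjoint union of paths, and more generally every bipartite graph, is probe $P_5$-free), while graphs dense enough to force a surviving $P_5$ after every independent deletion tend to be $P_5$-free outright. I would therefore engineer $G^\star$ so that its induced $P_5$'s overlap heavily and every transversal is forced to contain two adjacent vertices; an alternative route is to imitate the parity argument of Lemma~\ref{l-inf} inside a gadget whose only induced cycles are short, forcing an unavoidable induced $P_5$ across the probe/non-probe boundary rather than inside the probes. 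Once $G^\star$ is fixed, checking $(C_7,C_9,\ldots)$-freeness is routine, but verifying that no choice of independent $N$ and edge set $F$ makes $G^\star+F$ a $P_5$-free graph is the delicate step and would require a finite case analysis over the maximal independent sets of $G^\star$.
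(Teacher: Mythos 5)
Your first half is fine: the inclusion follows from Lemma~\ref{l-inf} plus the (correct) observation that probe $P_5$-freeness is hereditary, which is exactly how the paper argues it. Your deletion criterion for disproving probe $P_5$-freeness is also correct and is precisely the mechanism the paper uses: for any witnessing partition, $(G+F)[P]=G[P]$, so if every independent set $N$ leaves an induced $P_5$ in $G-N$, the graph cannot be probe $P_5$-free.

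The gap is that you never actually produce the witness graph $G^\star$; you only describe desiderata for it and explicitly defer "the main obstacle". Since the strictness of the inclusion is the entire content of the lemma beyond Lemma~\ref{l-inf}, the proof is incomplete as it stands: "engineer $G^\star$ so that every transversal of its induced $P_5$'s contains two adjacent vertices" is the right target but is not a construction, and the feasibility of that target is exactly what needs to be demonstrated. The resolution is much easier than your discussion of the "tension" suggests: take the graph $L$ obtained from $P_5$ by replacing each vertex $u_i$ by a pair of adjacent true twins $u_i,u_i'$ (i.e.\ the lexicographic product $P_5[K_2]$). Any independent set misses at least one vertex of each twin pair, so picking a surviving twin from each level yields an induced $P_5$ inside $L[P]$, and your criterion applies; moreover $L$ is chordal (a proper interval graph), so it is trivially $(C_7,C_9,C_{11},\ldots)$-free. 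This realises exactly the "every transversal contains two adjacent vertices" idea you proposed, but without it written down and verified, the strictness claim is unproven.
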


\begin{proof}
By Lemma~\ref{l-inf}, every probe $P_5$-subgraph-free graph is $(C_7,C_9,C_{11},\ldots)$-free. 

To show that the inclusion is strict, we construct the following graph. Take two vertex disjoint paths $u_1u_2u_3u_4u_5$ and $u_1'u_2'u_3'u_4'u_5'$ on five vertices. For $i=1,\ldots,5$, we add the edge 
$u_iu_i'$. For $i=1,\ldots,4$, we add also the edges $u_iu_{i+1}'$ and $u_i'u_{i+1}$. Let $L$ be the resulting graph. We note that for $i=1,\ldots,5$, the vertices $u_i$ and $u_i'$ are true twins in $L$. 

For a contradiction, assume that $L$ is probe $P_5$-free, implying $L$ contains an independent set $N$ such that $L+F$ is $P_5$-free for some subset of edges that can be added between vertices of $N$. Let $P=V(L)\setminus N$. Because each two vertices $u_i$ and $u_i'$ form a pair of true twins, $N$ contains at most one vertex of $\{u_i,u_i'\}$. By symmetry, we may assume that none of $u_1,\ldots u_5$ belongs to $N$. Hence, $u_1u_2u_3u_4u_5$ forms an induced $P_5$ in $L[P]$ and thus in $L+F$, a contradiction. We conclude that $L$ is not probe $P_5$-free.
It remains to observe that $L$ is $(C_7,C_9,C_{11},\ldots)$-free.
\end{proof}

\noindent
We can now conclude the following:

\begin{theorem} \label{thm:infinite}
It holds that $\{C_7,C_9,C_{11},\ldots\}\subsetneq {\cal F}_{2P_2}$ and $\{C_7,C_9,C_{11},\ldots\}\subsetneq {\cal F}_{P_5}$.
\end{theorem}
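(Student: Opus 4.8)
The plan is to establish the two containments $\{C_7,C_9,\ldots\}\subseteq{\cal F}_{2P_2}$ and $\{C_7,C_9,\ldots\}\subseteq{\cal F}_{P_5}$ first, and then to argue that each is strict. Recall that, by definition, a graph $F$ lies in ${\cal F}_H$ exactly when $F$ is not probe $H$-free but every proper induced subgraph of $F$ is probe $H$-free, and that a graph fails to be probe $H$-free if and only if it contains some member of ${\cal F}_H$ as an induced subgraph. I will also repeatedly use the observation that $2P_2\subseteq_i P_5$, so that every probe $2P_2$-free graph is probe $P_5$-free (if $G+F$ is $2P_2$-free, then it is $P_5$-free); equivalently, a graph that is not probe $P_5$-free is not probe $2P_2$-free either.

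For the containments, fix $\ell\geq 3$. By Lemma~\ref{l-inf}, $C_{2\ell+1}$ is not probe $P_5$-free, and hence, by the implication above, it is not probe $2P_2$-free. It remains to verify minimality, i.e.\ that every proper induced subgraph of $C_{2\ell+1}$ is probe $2P_2$-free (which automatically yields probe $P_5$-freeness). Deleting at least one vertex from a cycle leaves a disjoint union of paths, i.e.\ a linear forest, which is bipartite. Any bipartite graph is probe $2P_2$-free: take one side $N$ of a bipartition as the non-probes and complete it to a clique; the result is a split graph, which is $2P_2$-free (this is exactly the construction already used for even cycles in the discussion preceding Lemma~\ref{l-inf}). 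Hence every proper induced subgraph of $C_{2\ell+1}$ is probe $2P_2$-free, so $C_{2\ell+1}\in{\cal F}_{2P_2}\cap{\cal F}_{P_5}$, as required.

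For strictness I will use the graph $L$ constructed in the proof of Lemma~\ref{l-2}. That lemma shows that $L$ is not probe $P_5$-free while being $(C_7,C_9,\ldots)$-free; by the implication above, $L$ is also not probe $2P_2$-free. Since $L$ is not probe $2P_2$-free (respectively, not probe $P_5$-free), it must contain an induced copy of some $F\in{\cal F}_{2P_2}$ (respectively, $F\in{\cal F}_{P_5}$). As an induced subgraph of the $(C_7,C_9,\ldots)$-free graph $L$, this $F$ is itself $(C_7,C_9,\ldots)$-free, so $F\neq C_{2\ell+1}$ for every $\ell\geq 3$. Thus ${\cal F}_{2P_2}$ and ${\cal F}_{P_5}$ each contain a member lying outside $\{C_7,C_9,\ldots\}$, and both inclusions are strict.

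The only genuinely delicate point is the minimality step: one must be certain that \emph{every} proper induced subgraph of $C_{2\ell+1}$ is probe $2P_2$-free, which is precisely why I reduce it to the bipartiteness of linear forests and the split-graph construction. The strictness step is then essentially a bookkeeping consequence of Lemma~\ref{l-2} together with the abstract existence of the obstruction set ${\cal F}_H$; note that it only guarantees that an additional obstruction occurs inside $L$, without exhibiting one explicitly.
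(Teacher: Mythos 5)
Your proposal is correct and follows essentially the same route as the paper's proof: Lemma~\ref{l-inf} (plus the observation that $2P_2\subseteq_i P_5$) gives that the odd cycles of length at least~$7$ are not probe $2P_2$-free or probe $P_5$-free, the bipartite/split-graph construction shows that all their proper induced subgraphs are, and the graph $L$ from Lemma~\ref{l-2} yields strictness. You spell out the minimality and transfer steps more explicitly than the paper does, but the argument is the same.
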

\begin{proof}
We first note that every disjoint union of paths is a bipartite graph. We recall that bipartite graphs are probe $2P_2$-free: let $N$ be one of the two partition classes and change $N$ into a clique by adding all the edges between vertices of $N$. 
Hence, every induced subgraph of every cycle is probe $2P_2$-free and thus also probe $P_5$-free.
It remains to combine this observation with Lemma~\ref{l-2} and the fact that no cycle $C_r$ is an induced subgraph of a cycle $C_s$ whenever $s\neq r$.
\end{proof}

\noindent
Theorem~\ref{thm:infinite} implies that ${\cal F}_{2P_2}$ and ${\cal F}_{P_5}$ are infinite. However, we do not know its exact composition:

\begin{open} \label{o-fp5}
Determine ${\cal F}_{P_5}$.
\end{open}

Lemma~\ref{l-2} showed that probe $P_5$-free graphs form a proper subclass of $(C_7,C_9,C_{11},\ldots)$-free graphs, that is, graphs with no odd hole of length at least~$7$.
This leads to the following natural question:

\begin{open}
Determine the complexity of $3$-{\sc Colouring} for $(C_7,C_9,C_{11},\ldots)$-free graphs.
\end{open}

\noindent
It is known that {\sc $3$-Colouring} is polynomial-time solvable for odd-hole free graphs, i.e., $(C_5,C_7,C_9,\ldots)$-free graphs (see~\cite{O})
and $(K_4,C_7,C_9,C_{11},\ldots)$-free graphs are $\chi$-bounded~\cite{CSSS20}.

We next consider the question of recognizing probe $H$-free graphs. It is known that probe $P_4$-free graphs can be recognized in polynomial time~\cite{chang2005ontherecognition}. However, we do not know whether this holds for probe $P_5$-free graphs and thus state:

\begin{open} \label{o-recog}
Determine the complexity of recognizing probe $P_5$-free graphs.
\end{open}

\noindent
An answer to Problem~\ref{o-fp5} could help solve Problem~\ref{o-recog}. If Problem~\ref{o-recog} is resolved positively, in the sense that a polynomial-time algorithm exists, then combined with Theorem~\ref{t-poly} this immediately implies that \kCol{$3$} can be solved in polynomial time on probe $P_5$-free graphs without a given partition $(P,N)$ of their vertex set.

\subsection{Broader Future Research Directions}
We finish with three other, {\it broader} directions for future work.
First, we recall our result (Proposition~\ref{p-easy}) that $C_3$-free probe $P_5$-free graphs (and thus all probe $(C_3,P_5)$-free graphs) are $3$-colourable, which generalizes the same result for $(C_3,P_5)$-free graphs~\cite{WS01}. There is a long history for obtaining constant (tight) bounds on the chromatic number of $(H_1,H_2)$-free graphs; see also the recent survey~\cite{CK25}. In particular, for all $r,t\geq 1$, every $(K_r,P_t)$-free graph can be coloured with at most $(t-2)^{r-2}$ colours~\cite{GHM03}, improving an older result in~\cite{Gy87}. 
It is also known e.g. that every $(C_3,sP_2)$-free graph is $(2s-2)$-colourable for all $s\geq 3$~\cite{Br02}, and 
 that every $(C_3,2P_3)$-free graph~\cite{Py13}, every $(C_3,P_2+P_4)$-free graph~\cite{broersma2012updating} and every $(K_4,2P_2)$-free graph~\cite{GH19} is $4$-colourable. In particular, subclasses of $P_5$-free graphs are well studied; see~\cite{CK23,DXX22,DXX23} for several of such results. 

\begin{quote}
{\it To what extent can all these results be extended to the probe version?}
\end{quote}

\noindent
Proposition~\ref{p-easy} illustrates there exist graph classes with a positive answer.

Second, generalizing the above question, we ask for which other graph classes $\mathcal{G}$, are \ProbeCol{} and \kCol{$k$} polynomially solvable on the class of (partitioned) probe graphs $\mathcal{G}_p$? Recall from Section~\ref{sec:intro} that \Col{} is polynomial-time solvable for probe chordal graphs (as these graphs are perfect) and that in 2012, Chandler et al.~\cite{chandler2012probegraphclasses} conjectured the same for partitioned probe perfect graphs.

Third, our results indicate that the reason for polynomial-time solvability of $3$-{\sc Colouring} for $P_5$-free graphs goes beyond boundedness of mim-width.
The question, which may also be asked for other problems, is if we can push further in this direction. So far,
we assumed that the set of non-probes $N$ is an independent set. This is an extreme case in a more general model in which we assume that we {\it do} know some of the edges in $G[N]$.

\begin{open}
 {\it Determine if $3$-{\sc Colouring} is polynomially solvable on graphs $(G,P,N)$ that can be made $P_5$-free by adding 
 new edges to $G[N]$, which may have some initial edges.} 
 \end{open}
 
 \noindent
 If we allow an arbitrary initial set of edges in $G[N]$, then the problem is already \NP-complete on graphs that can be made $2P_1$-free:
take the graph $G=(V,E)$ used in any \NP-hardness reduction for $3$-{\sc Colouring}; set $N:=V$; and let $F$ be the set of all edges that are not already in $G$ (so $G+F$ is complete, that is, $G+F$ is $2P_1$-free).

\subparagraph*{Acknowledgment}
We thank Cl\'ement Dallard and Andrea Munaro for helpful discussions and ideas that ultimately led to the proof of Theorem~\ref{thm:3col-probe-p5}.

\bibliography{references}

\end{document}